\newcommand{\dsum}[1]{\displaystyle{\sum_{#1}}} 
\newcommand{\dprod}[1]{\displaystyle{\prod_{#1}}}
\newcommand{\dmax}[1]{\displaystyle{\max_{#1}}}
\newcommand{\dmin}[1]{\displaystyle{\min_{#1}}}
\newcommand{\dquote}[1]{``\nolinebreak{#1}\nolinebreak''}
\newcommand{\defeq}{\stackrel{\rm def}{=}}
\newcommand{\perm}[2]{{}_{#1}{\rm P}_{#2}}
\newcommand{\refsec}[1]{Section \nolinebreak\ref{sec:#1}}
\newcommand{\reffig}[1]{Figure \nolinebreak \ref{fig:#1}}
\newcommand{\reftbl}[1]{Table \nolinebreak \ref{tbl:#1}}
\newcommand{\refeq}[1]{\text{Equation} \nolinebreak (\ref{eq:#1})}
\newcommand{\reffm}[1]{\text{Formula} \nolinebreak (\ref{fm:#1})}
\newcommand{\reflm}[1]{\text{Lemma} \nolinebreak \ref{lm:#1}}
\newcommand{\refcor}[1]{\text{Corollary} \nolinebreak \ref{cor:#1}}
\newcommand{\refth}[1]{\text{Theorem} \nolinebreak \ref{th:#1}}
\newcommand{\refalg}[1]{\text{Algorithm} \nolinebreak \ref{alg:#1}}
\newcommand{\refapp}[1]{Appendix \nolinebreak\ref{app:#1}}
\newcommand{\pr}{{\rm Pr}}
\newcommand{\im}{{\rm Im}}
\newcommand{\disp}{\displaystyle}
\newcommand{\dhead}[1]{\vspace{2mm}\indent $\displaystyle{{#1}}$\\}   
\newcommand{\dcont}[1]{\indent \indent $\displaystyle{{#1}}$\\}   
\newcommand{\fright}[1]{\begin{flushright}{#1}\end{flushright}}
\newcommand{\dcontx}[2]{\indent \indent $\displaystyle{{#1}}$\fright{(#2)}}
\newtheorem{atheorem}{Theorem}
\newcommand{\st}{\STATE}
\newcommand{\ignore}[1]{{}}
\newcommand{\kkk}{k}
\begin{document}

\title{
{\it k}-anonymous Microdata Release
via Post Randomisation Method}
\author{Dai Ikarashi \and Ryo Kikuchi \and Koji Chida \and  Katsumi Takahashi}
\institute{
	NTT Secure Platform Laboratories, \\
	\url{{\tt\{ikarashi.dai, kikuchi.ryo, chida.koji, takahashi.katsumi\}@lab.ntt.co.jp}}
}
\normalsize
\maketitle
\begin{abstract}
The problem of the release of anonymized microdata is an important topic
in the fields of statistical disclosure control (SDC) and privacy preserving data publishing (PPDP), 
and yet it remains sufficiently unsolved.
In these research fields, 
$k$-anonymity has been widely studied as an anonymity notion for mainly deterministic anonymization algorithms,
and some probabilistic relaxations have been developed.
However, 
they are not sufficient due to their limitations, i.e., being weaker than the original $k$-anonymity or  
requiring strong parametric assumptions.
First we propose $Pk$-anonymity, 
a new probabilistic $k$-anonymity, and prove that $Pk$-anonymity is a mathematical extension of $k$-anonymity rather than a relaxation.  
Furthermore, $Pk$-anonymity requires no parametric assumptions.\\
This property has a significant meaning in the viewpoint that it enables us to compare privacy levels of probabilistic microdata release algorithms with deterministic ones.  
Second, we apply $Pk$-anonymity to
the post randomization method (PRAM), 
which is an SDC algorithm based on randomization.
PRAM is proven to satisfy $Pk$-anonymity in a controlled way, i.e, 
one can control PRAM's parameter so that $Pk$-anonymity is satisfied.
On the other hand, PRAM is also known to satisfy $\varepsilon$-differential privacy, 
a recent popular and strong privacy notion. 
This fact means that our results significantly enhance PRAM since it implies the satisfaction of {\em both} important notions:
$k$-anonymity and $\varepsilon$-differential privacy.
\end{abstract}

\keywords{Post Randomization Method (PRAM), $k$-anonymity, differential privacy} 

\section{Introduction}


Releasing microdata while preserving privacy has been widely studied in the fields of statistical disclosure control (SDC) and privacy preserving data publishing (PPDP).
Microdata has significant value, 
especially for data analysts who wish to conduct various type of analyses involving the viewing of 
whole data and determining what type of analysis they should conduct. 

The most common privacy notion for microdata release is {\em $k$-anonymity} proposed by Samarati and Sweeney \cite{SS98,Sw02}.
It means that \dquote{no one can narrow down a person's record to $k$ records.}
This semantics is quite simple and intuitive.
Therefore, 
many studies have been conducted on $k$-anonymity, 
and many relevant privacy notions such as $\ell$-diversity \cite{MGK06}, have also been proposed.
Among these relevant studies,
applying $k$-anonymity to probabilistic algorithms is a significant research direction.
Most $k$-anonymization algorithms deterministically generalize or partition microdata. 
However, 
there are probabilistic SDC methods such as random swapping, random sampling, and post randomization method (PRAM) \cite{KWG97}.
How are these probabilistic algorithms related to $k$-anonymity?

Regarding random swapping, 
for example, 
Soria-Comas and Domingo-Ferrer answered the above question by relaxing $k$-anonymity to a probabilistic $k$-anonymity, 
which means that \dquote{no one can {\em correctly link} a person to a record with a higher probability than $1/k$ \cite{SD12}.} 
Intuitively, this semantics seems to be very close to that of the original $k$-anonymity.
However, 
its precise relation to $k$-anonymity has not been argued, and we still cannot definitely say that an algorithm satisfying their probabilistic $k$-anonymity also is $k$-anonymous.

PRAM was proposed by Kooiman et al. in 1997. 
It changes data into other random data according to the probability on a {\it transition probability matrix}. 
Agrawal et al. also developed privacy preserving OLAP (Online Analytical Processing) \cite{AST05} by retention-replacement perturbation, 
which is an instantiation of PRAM. 
For many years, 
PRAM's privacy was not clarified;
however, 
PRAM has been recently proven to satisfy {\em $\varepsilon$-differential privacy (DP)} \cite{LWR12}. 

Differential privacy \cite{Dw06} is another privacy notion that has attracted a great deal of attention recently. 
$\varepsilon$-DP is the original version of DP and many other relevant notions have been developed, 
e.g., $(\varepsilon, \delta)$-DP, 
which is a relaxation of $\varepsilon$-DP.



\subsection{Motivations}

After the proposal, $\varepsilon$-DP has been widely researched and is now known to be very strong privacy notion. 
Thus, it is natural that the satisfaction of $\varepsilon$-DP is important. 
However, especially in the PPDP field, $k$-anonymity is as important as $\varepsilon$-DP, 
although it takes only re-identification into consideration
and several papers showed the limitation of $k$-anonymity \cite{MGK06,LLV07}. 
This notion is very simple and intuitive;
therefore, the enormous number of techniques has been invented, 
and as a result, $k$-anonymity has already spread among the businesspeople, doctors, etc., 
who are conscious about privacy, not only among the researchers. 
From the viewpoint of practice, it is a great merit that people recognize and understand the notion.

Therefore, merging the two notions while preserving 
their theoretical guarantees in a controlled way is desirable.
However, 
$k$-anonymity applies only to deterministic anonymization algorithms,
and $\varepsilon$-DP applies to randomized ones;
thus, it has been hard to manage both of them at once until now.

PRAM has several good features, and 
we believe that it is one of promising candidates for PPDP.
The anonymization step in PRAM is performed by a record-wise fashion
so anonymizing data in parallel is easy, and 
we can extend PRAM to a local perturbation, i.e., 
an individual anonymizes his/her data before sending them to the central server.
In addition, PRAM does not needs generalization,
so we can obtain anonymized data with fine granularity 
and perform a fine-grained analysis on them.
Furthermore, it is known that PRAM can satisfy 
$\varepsilon$-DP \cite{LWR12}.

Although PRAM has these features
and was proposed \cite{KWG97} before when the methods 
satisfying $k$-anonymization \cite{Sw02} and 
satisfying $\varepsilon$-DP \cite{Dw06} were proposed,
it has been studied less than other approaches
in the area of PPDP.
Most popular methods for PPDP are evaluated in the context of $k$-anonymity.
However, PRAM is a probabilistic method,
so it cannot be evaluated in the context of $k$-anonymity.
This means that no one can compare PRAM with other methods for PPDP
in the same measure.

From the above circumstances, our aim of the paper is twofold.
First, we extend $k$-anonymity for probabilistic methods
(not only PRAM) for merging $k$-anonymity and $\varepsilon$-DP.
Second, we evaluate how strongly PRAM preserves privacy
in the context of $k$-anonymity.

\subsection{Contributions}

Our contributions are the following two points. 

\paragraph{Extending $k$-anonymity for Probabilistic Methods}
We propose $Pk$-anonymity, which
has the following four advantages compared to current probabilistic $k$-anonymity notions. 

1. It is formally defined and sufficient to prove that 
it is a rigorous extension of the original $k$-anonymity. 
Specifically, we prove that $k$-anonymity and $Pk$-anonymity are totally equivalent if an anonymization algorithm is deterministic, 
in other words, 
if the algorithm is in the extent of conventional $k$-anonymization.
We claim that 
one can consider a set of microdata anonymized using a probabilistic algorithm as $k$-anonymous 
if it is $Pk$-anonymous. 

2. Its semantics is \dquote{no one estimates which person the record came from with more than $1/k$ probability (regardless of the link's actual correctness).}
From the viewpoint that 
privacy breaches are not only derived from correct information, 
this semantics is stronger than the prevention of only correct links. 

3. $Pk$-anonymity never causes failure of anonymization. 
Some current probabilistic $k$-anonymity notions are defined as \dquote{satisfaction of $k$-anonymity with certain probability.}  
Unlike these notions, 
$Pk$-anonymity always casts a definite level of re-identification hardness to the adversary
while it is defined via the theory of probability.  

4. It is non-parametric;
that is, 
no assumption on the distribution of raw microdata is necessary. 
Furthermore, 
it does not require any raw microdata to evaluate $k$. 


\paragraph{Applying $Pk$-anonymity to PRAM}
$Pk$-anonymity on PRAM is analyzed. 
The value of $k$ is derived from parameters of PRAM with no parametric assumption. 
Furthermore, 
we propose an algorithm to satisfy both $Pk$-anonymity (and $\varepsilon$-DP) with any value of $k$ (and $\varepsilon$) is given. 



\subsection{Related Work}\label{sec:related}

\subsubsection{On Probabilistic $k$-anonymity Notions}
There are many studies on $k$-anonymity, and it has many supplemental privacy notions such as 
$\ell$-diversity and $t$-closeness \cite{LLV07}.

There have also been several studies that are relevant to the probability.

Wong et al.
proposed $(\alpha, k)$-anonymity \cite{WLFW06}.
Roughly speaking, 
$(\alpha, k)$-anonymity states that (the original) $k$-anonymity is satisfied with probability $\alpha$. 
Lodha and Thomas proposed $(1-\beta, k)$-anonymity.
This is a relaxation from $k$-anonymity in a sample to that in a population.
These two notions are essentially based on the original $k$-anonymity and 
are relaxations that allow failures of anonymization in a certain probability. 
$Pk$-anonymity is fully probabilistically defined and never causes failure of anonymization. 

Aggarwal proposed a probabilistic $k$-anonymity \cite{Ag08}. 
Their goal was the same as with $Pk$-anonymity; 
however, 
it requires a parametric assumption that the distribution of raw microdata is a parallel translation of randomized microdata, 
and this seems to be rarely satisfied
since a randomized distribution is generally flatter than the prior distribution.

Soria-Comas and Domingo-Ferrer also proposed their probabilistic $k$-anonymity \cite{SD12}. 
They applied it to random swapping and micro-aggregation. 
The semantics of their anonymity is 
\dquote{no one can {\em correctly link} a person to a record with a higher probability than $1/k$}
and $Pk$-anonymity is stronger. 
Unfortunately, 
further comparison is difficult since we could not find a sufficiently formal version of the definition.

\subsubsection{On Privacy Measures Applicable to PRAM}

Aggarwal and Agrawal proposed 
a privacy measure based on conditional differential entropy \cite{AA01}. 
This measure requires both raw and randomized data to be evaluated, 
unlike $Pk$-anonymity.

Agrawal et al. proposed $(s, \rho_{1},\rho_{2})$ Privacy Breach \cite{AST05},
which is based on probability and applicable to retention-replacement perturbation.
In contrast to $k$-anonymity, it does not take into account background knowledge concerning raw data, 
that is, concerning quasi-identifier attributes.

 

Rebollo-Monedero et al.  \cite{RFD10} proposed a $t$-closeness-like privacy criterion 
and a distortion criterion which are applicable to randomization, 
and showed that PRAM can meet these criteria.
Their work was aimed at clarifying the privacy-distortion trade-off problem via information theory,
in the area of attribute estimation.
Therefore, 
they did not mention whether PRAM can satisfy a well known privacy notion
such as $k$-anonymity.


\subsubsection{On Microdata Release Algorithms Satisfying $k$-anonymity and DP}

Li et al. proposed a method satisfying $k$-anonymity and $(\varepsilon, \delta)$-DPS by combining random sampling and $k$-anonymization \cite{LQS12}. 
Since $(\varepsilon, \delta)$-DPS is based on $(\varepsilon, \delta)$-DP, PRAM's $\varepsilon$-DP is stronger. 

Soria-Comas and Domingo-Ferrer proposed methods for $t$-closeness and $\varepsilon$-DP. 
However, a certain amount of the adversary's knowledge is assumed. 
Additionally, it cannot be applied when the adversary has any knowledge about all attributes.
On the other hand, PRAM guarantees $\varepsilon$-DP regardless of the adversary's knowledge. 

\subsubsection{On Probabilistic Anonymization Algorithms Related to PRAM}
There have been several studies \cite{MS06,RH02,EGS03,AHP09} on local perturbation in which 
individuals anonymize their respective data before transferring it to some central server.

Agrawal et al. proposed a FRAPP \cite{AHP09}.
They use a specific transition probability matrix called 
MASK \cite{RH02} and Cut and paste \cite{ESAG04}
to satisfy $\rho_1$-to-$\rho_2$ privacy breach \cite{EGS03}.
After that, Rastogi et al. \cite{RHS07} proposed the $\alpha\beta$-algorithm
that improves utility.
These methods are closely related to PRAM, but
they do not consider whether PRAM can satisfy a well-known notion such as $k$-anonymity.


\subsection{Organization of Paper}

In \refsec{preliminary}, 
we discuss the notations used in the paper and preliminary definitions. 
In \refsec{pk-anonymity}, 
we propose our probabilistic $k$-anonymity, $Pk$-anonymity.
In \refsec{apply}, 
we apply $Pk$-anonymity to PRAM
and give algorithms 
for PRAM to satisfy both $\varepsilon$-DP and $Pk$-anonymity.
In \refsec{ex}, 
we describe the experimental results regarding the utility of PRAM with parameters derived from the algorithms given in the previous sections. 
Finally, we state the conclusions of this paper in \refsec{conclusions}.

\if0

\subsection{Reconstruction Method}\label{chp:reconstruction_method}

Reconstruction Method was proposed in \cite{AS00} for data mining while protecting data providers' privacy even from an administrator.
It consists of two parts, 
{\it perturbation} and {\it reconstruction}.

In the perturbation part, 
data providers randomize their own data with a given stochastic parameter.
In general, 
randomization is an irreversible manipulation and decreases the information quantity of data.
This irreversibility makes an adversary unable to restore and obtain the data he/she desires.
Specifically, there are some perturbations, 
for instance, 
additive perturbation, 
multiplicative perturbation, and
retention-replacement perturbation.
Additive perturbation is applicable to continuous attributes, 
and is done arithmetically by adding a random value to the original value. 
The first reconstruction method \cite{AS00} adopts this perturbation.
Multiplicative perturbation is also for continuous attributes, 
and is done by multiplying a value, or rotating values by multiplying them by a matrix.
Retention-replacement perturbation, which we will evaluate our $Pk$-anonymity later,
was proposed by Agrawal et al. \cite{AST05}, and is mainly applicable to categorical data.
This perturbation replaces data with a random value with given probability. 
We introduce this perturbation in detail later. 

Reconstruction is the process to obtain analytic results from randomized databases.
The usual analysis method on randomized databases provides one with only low accurate results due to their randomness. 
Reconstruction provides more precise results by probabilistic estimation, even from such databases.
One may worry that personal data can be reconstructed similarly, 
but, only statistical values are reconstructed. 
Reconstruction relies on the property that statistical values converge to a certain value depending on the perturbation as the records increase, 
and estimates a correct value from it.
Since personal data themselves are not statistical, 
they are never reconstructed that way.
As a reconstruction algorithm, mostly the iterative Bayesian technique, and, in some cases, the matrix inversion technique, are used.
Generally, strong randomization provides high privacy, but low utility.

\section{Overview of {\it Pk}-anonymity}\label{sec:overview}

We give an overview of our new privacy measure, $Pk$-anonymity, 
before the formal description. 

$Pk$-anonymity is a measure for anonymity, which prevents identification, just like $k$-anonymity. 
As mentioned above, attribute estimation is quite difficult to prevent.
Therefore, it is significant to prevent this identification first.

However, Agrawal et al. \cite{AS00} aimed to preserve privacy against the administrator, 
who should know where the data come from.
It is meaningful to consider anonymity in such a situation?
The answer is \dquote{yes.}
First, using anonymous communication channels studied in the field of applied cryptographies, 
personal data providers can send their data hiding the data owner. 
In this case, anonymity takes effect even against the administrator.
Second, for adversaries who are not the administrator, anonymity also works.
When one considers a model in which the randomized database is published, 
the evaluation of anonymity against such adversaries is necessary.

But, why don't we apply $k$-anonymity to the reconstruction method directly?
It is because $k$-anonymity does not work correctly on randomization-based privacy preservation.
For example, when data are randomized by uniformly random distribution, 
$k$-anonymity may not be satisfied, although the data have no information according to information quantity theory.
In other words, the uncertainty by randomization is not considered in $k$-anonymity. 
For evaluating anonymity on randomized databases, 
such uncertainty needs to be considered. 
$Pk$-anonymity relies on the theory of probability, 
so that it can evaluate the anonymity on randomized databases. 

$Pk$-anonymity is not only a measure for anonymity {\it like} $k$-anonymity, 
but is also {\it equivalent} to $k$-anonymity in the domain to which $k$-anonymity has been applied, 
i.e., non-randomized databases.
Moreover, even on randomized databases, 
$k$-anonymity's intuitive recognition, 
\dquote{no one can narrow down a person's record to less than $k$ records,}
is true if $Pk$-anonymity is satisfied.
Therefore, one can explain the anonymity by this simple expression.  

$Pk$-anonymity is not only a simple and intuitive measure, 
but also provides very powerful anonymity in the following two points.

First, $Pk$-anonymity is valid even for adversaries who have knowledge about raw personal data. 
Conventional measures are not valid in such cases because it is assumed that adversaries only know the randomized databases.
$Pk$-anonymity is designed to preserve anonymity against such adversaries;
more precisely, adversaries are assumed to have the possibilities to possess any knowledge about personal data and to use any estimator algorithm.

Second, $Pk$-anonymity can be evaluated before the data are provided, 
since it is evaluated with neither raw personal data nor randomized data.
It only requires the database schema and the expected number of records. 
This property is very important because, data providers probably want to be sure of safety before they send the data.  

\fi
\section{Preliminaries}\label{sec:preliminary}

\subsection{Basic Settings}
We consider two scenarios 
of microdata release using randomization. 
One is the setting in which a database administrator randomizes microdata (\reffig{setting}(a)). 
The other is that in which individuals randomize their own records (\reffig{setting}(b)). 
The latter is better with respect to privacy. 
PRAM is not only applicable to the former 
but also applicable to the latter \cite{AST05}
in contrast, $k$-anonymity can only be applied to the former. 
Thus, our $Pk$-anonymity is applicable to both scenarios via PRAM.

Since a person randomizes his/her data in the latter scenario, 
no one has all the raw microdata. 
Therefore, 
a person should be able to conduct appropriate randomization without another person's record. 
Fortunately, we can show that PRAM's parameter satisfying $Pk$-anonymity and DP can be determined using
only the expected record count and metadata of attributes, as mentioned in \refsec{apply}.

\begin{figure}
\begin{center}
\includegraphics[width = 7.5cm]{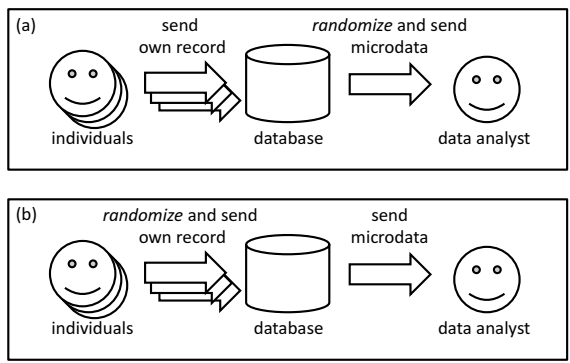}	
\caption{Two Scenarios of Microdata Release using PRAM} \label{fig:setting}
\end{center}
\end{figure}



\subsection{Notation}\label{sec:not}

We treat a table-formed database as both private and released data. 
Since  
the record count is revealed at the same time that the data are released in an ordinary microdata release,
we assume that the record count is public and static in theory.  
Furthermore, 
we consider attributes as one bundled direct product attribute
since it is sufficient for theoretical discussion.  

Basically, 
we use the following notations. 
\begin{itemize}
\item ${\cal T}$: the set of any private tables
\item $\tau, T$: a private table as an instance/random variable
\item ${\cal T}'$: the set of any released tables
\item $\tau', T'$: a released table as an instance/random variable
\item ${\cal R}, {\cal R}'$: the sets of all records in a private/released table
\item ${\cal V}, {\cal V}'$: the sets of any values in a private/released table
\item $A$: a transition probability matrix in PRAM
\item $f_{X}$: the probability function of $X$ where $X$ is a random variable
\end{itemize}
In the discussion of multi-attributes, 
we also use the following notations. 
\begin{itemize}
\item ${\cal A}, {\cal A}'$: the sets of attributes in a private/released table
\item ${\cal V}_{a}, {\cal V}'_{a'}$ where $a\in{\cal A}$ and $a'\in{\cal A}'$: 
the sets of values in a private/released table, 
i.e., ${\cal V} = \prod_{a \in {\cal A} } {\cal V}_{a}$ and ${\cal V} = \prod_{a' \in {\cal A}' } {\cal V}'_{a}$ hold where $\prod$ means the direct product.

\item $A_{a}$ where $a\in{\cal A}$: transition probability matrix of each attribute
\end{itemize}
We consider a table $\tau\in{\cal T}$ (or, $\tau'\in{\cal T}'$) as a map from ${\cal R}$ to ${\cal V}$ (or, ${\cal R}'$ to ${\cal V}'$). More formally, we difine $\tau$ (or, $\tau'$) as follows.

\begin{definition}(tables)\\
Let a record set ${\cal R}$ and a value set ${\cal V}$ be finite sets. 
Then, the following map $\tau$ is called a table on $({\cal R}, {\cal V})$.  
\begin{equation*}
\tau   : {\cal R}    \to    {\cal V},
\end{equation*}
When we discuss a multi-attribute table, 
${\cal V}$ is represented as $\prod_{a \in    {\cal A}   }{{\cal V}   _{a}}$, 
where an attribute set ${\cal A}$ is a finite set, 
each ${\cal V}_{a}$ is also a finite set for any $a\in{\cal A}$,
and $\prod$ means the direct product.
\end{definition}

\subsection{PRAM}

PRAM \cite{KWG97} was proposed by Kooiman et al. in 1997
as a privacy preserving method for microdata release. 
It changes data according to a {\em transition probability matrix}. 
A transition probability matrix consists of probabilities in which each value in a private table will be changed into other specific (or the same) values. 
$A_{u,v}$ denotes the probability $u \in \mathcal{V}$ is changed into $v\in \mathcal{V}'$.
For example, $A_{\text{male},\text{female}}$ means \dquote{male $\to$ female} is $25\%$.

PRAM is a quite general method. 
Invariant PRAM \cite{KWG97}, 
retention-replacement perturbation \cite{AST05}, 
etc., 
are known as instantiations of it. 
Specifically, 
retention-replacement perturbation is simple and convenient. 

\subsubsection{Retention-replacement Perturbation}\label{sec:rrp}

In retention-replacement perturbation, 
individuals probabilistically replace their data with random data
using given {\it retention probability} $\rho$. 
First, data are retained with $\rho$, 
and if the data are not retained, 
they will be replaced with a uniformly random value chosen from the attribute domain.
Note that even if data are not retained, 
there is still the possibility that the data will not be changed, 
because the data value is included in the attribute domain as well as other values.
For example, for an attribute \dquote{sex,}
when $\rho = 0.5$, \dquote{male} is retained with $1/2$ probability, 
and with the remaining $1/2$ probability, 
it is replaced with a uniformly random value, 
namely, a value \dquote{female} and a value \dquote{male,} which is the same as the original, both with $1/2 \times   1/2 = 1/4$ probability.
Eventually, the probability that \dquote{male} changes into \dquote{female} is $1/4$, 
and the probability that it does not change is $3/4$. 
The lower the retention probability, 
the higher privacy is preserved.
On the contrary, the lower the probability, 
the lower utility.
These probabilities form the following transition probability matrix. 
\begin{eqnarray*}
\left[
\begin{array}{cc}
0.75 & 0.25 \\
0.25 & 0.75\\
\end{array}
\right]
\end{eqnarray*} 
Generally, the transition probability matrix $A_{a}$ of an attribute $a$ is written as
\begin{equation*}
(A_{a})_{v_{a}, v'_{a}} = \begin{cases}
\rho   _{a} + \dfrac{(1-\rho   _{a})}{|{\cal V}   _{a}|} & \text{if }v_{a} = v'_{a} \\
\dfrac{(1-\rho   _{a})}{|{\cal V}   _{a}|} & \text{otherwise}
\end{cases}
\end{equation*}
where for any $v \in    {\cal V}  $ and $a \in    {\cal A}  $, $v_{a} \in  {\cal V}  _{a}$ is an element of $v \in  {\cal V}  $ corresponding to $a$, 
and $\rho   _{a}$ is the retention probability corresponding to $a$.


\subsection{{\it k}-anonymity}\label{sec:k-anonymity}

The $k$-anonymity \cite{SS98} \cite{Sw02} is a privacy notion that is applicable to table-formed databases and defined as \dquote{for all database records, there are at least $k$ records whose values are the same,} 
in other words, \dquote{no one can narrow down a person's record to less than $k$ records.}

Using the notations in \refsec{not}, 
we represent the definition of $k$-anonymity \cite{Sw02} as follows. 
\begin{definition}{\em(}{\it $k$-anonymity}{\em)}

For a positive integer $k$, a released table $\tau'\in{\cal T}'$ is said to satisfy
 $k$-anonymity (or to be $k$-anonymous), 
if and only if it satisfies the following condition. 

\begin{quote}
For any $r' \in {\cal R}'$, there are $k$ or more $\hat{r'}$'s such that 
$\hat{r'}\in {\cal R}'$ and $\tau'(r') = \tau'(\hat{r'})$.
\end{quote}
\end{definition}
A released table $\tau'$ in the above definition represents all columns corresponding to quasi-identifier attributes of an anonymized table. 

However, the definition in \cite{Sw02} is problematic; i.e., 
there are some tables that satisfy $k$-anonymity but do not achieve its aim. 
For example, 
a table generated by copying all a private table's records $k$ times satisfies $k$-anonymity but 
it is obviously not safe.
Therefore, we assume $|{\cal R}| = |{\cal R}'|$ to strengthen the above definition in the discussion of $k$-anonymity in this paper.

\subsection{Anonymization and Privacy Mechanisms}
We define anonymization and privacy mechanisms separately
to discuss them formally.
First we define anonymization.
\begin{definition}\label{def_ano}
(anonymization)\\
Let ${\cal R}$, ${\cal R}'$, ${\cal V}$ and ${\cal V}'$ be finite sets,  
${\cal T}$ and ${\cal T}'$ be the sets of all tables on $({\cal R}, {\cal V})$ and $({\cal R}', {\cal V}')$, respectively, 
and let $\pi$ be a map $\pi: {\cal R} \to {\cal R}'$. 
Then, for any $\tau  \in  {\cal T}$ and $\tau'  \in  {\cal T}  '$, 
a map $\delta   : {\cal T}   \to    ({\cal R}    \to    {\cal V}   ')$ is called anonymization with $\pi$ from $\tau$ to $\tau'$ if and only if they satisfy
\begin{equation}
\delta   (\tau   ) = \tau   ' \circ    \pi  ,  \label{eq:delta_pi}
\end{equation}
where the notation ${\cal X}    \to    {\cal Y} $ denotes the set of all maps from ${\cal X}$ to ${\cal Y}$ for any set ${\cal X}$ and ${\cal Y}$.
\end{definition}

Anonymization $\delta$ represents an anonymization algorithm 
such as perturbation, $k$-anonymization, etc.
A map $\pi$ represents an anonymous communication channel, the shuffling function, or another component
which hides the order of records in $\tau$.
In this paper, we adopt the uniformly random permutation as $\pi$.\footnote{A map $\pi$
is essential for anonymization. For example, if the first record in the private table
is to be the first record in the released table, identification is trivial.}

Privacy mechanisms involve not only $\delta$ but also $\pi$, ${\cal R}$, ${\cal R}'$, ${\cal V}$ and ${\cal V}'$, 
and random variables are brought to extend the above definitions to probabilistic ones. 
Random variables corresponding to $\tau  $, $\tau  '$, $\pi  $, and $\delta  $ are denoted by $T$, $T'$, $\Pi  $, and $\Delta  $, 
respectively.  
We assume $T$, $\Pi  $, and $\Delta  $ are mutually independent as probabilistic events,  
while $T'$ is dependent on the other three random variables. 

\begin{definition}(privacy mechanisms)\\
Let ${\cal R}$, ${\cal R}'$, ${\cal V}$, ${\cal V}'$, ${\cal T}$, and ${\cal T}'$ be the same as Definition~\ref{def_ano}, 
and let $T$, $T'$, $\Pi$, and $\Delta$ be random variables on ${\cal T}$, ${\cal T}'$, ${\cal R} \to {\cal R}'$, and ${\cal T} \to ({\cal R} \to {\cal V}')$, respectively, 
such that $T$, $\Pi$, and $\Delta$ are mutually independent as probabilistic events,  
where the notation ${\cal X}    \to    {\cal Y} $ denotes the set of all maps from ${\cal X}$ to ${\cal Y}$ for any set ${\cal X}$ and ${\cal Y}$.
Then, the $6$-tuple $({\cal R}   , {\cal V}   , {\cal R}   ', {\cal V}   ', \Pi   , \Delta   )$ is called a {\it privacy mechanism} from $T$ to $T'$ if and only if they satisfy the following equation. 
\begin{equation*}
\Delta   (T   ) = T   ' \circ    \Pi    \label{eq:Delta_Pi}
\end{equation*}

\end{definition}

\subsection{Differential Privacy on PRAM}
Dwork proposed DP \cite{Dw06} in 2006. 
It results in 
\dquote{an (statistical) output not changing much even if a database is changed with respect to at most one person.}
Since it can be satisfied regardless of adversaries, 
it is being widely studied. 

Differential privacy is defined with a real number parameter $\varepsilon$. 
\begin{definition}{\em(}{\it $\varepsilon$-DP}{\em)}\\
Let $\mathscr{D}$ be a set of databases and $d$ be a non-negative integer. 
A privacy mechanism ${\cal K} : \mathscr{D} \to \mathbb{R}^{d}$ is a probabilistic algorithm, 
and $\varepsilon$ is a {\rm (}small{\rm )} positive real number. 
We say ${\cal K}$ gives $\varepsilon$-DP if, and only if 
for $S \subseteq {\rm Range}({\cal K})$ and 
any pair $D_{1}, D_{2}$ of databases \dquote{differing at most by $1$ element,} 
the following condition is satisfied.  
\vspace{2ex}
\begin{equation}\label{fm:1}
\Pr[{\cal K}(D_{1})\in S]\leq\exp(\varepsilon)\Pr[{\cal K}(D_{2})\in S]
\end{equation}
\end{definition}
Note that what 'databases' and \dquote{differing at most by $1$ element} mean remains free to interpretation.

Differential privacy is used as a privacy notion on interactive statistical databases as usual. 
However, PRAM is known to satisfy $\varepsilon$-DP for the query
\dquote{\texttt{select * from $\tau$}} in SQL manner \cite{LWR12}.
The query obviously represents the release of microdata.
We introduce the known result \cite{LWR12}
and discuss $\varepsilon$-DP on PRAM in addition to $Pk$-anonymity. 

PRAM satisfies $\varepsilon$-DP with the following parameters \cite{LWR12}. 
\begin{theorem}\label{th:dp2}
For any PRAM mechanism $\Delta$ whose transition probability matrix is denoted by $A$,  
$\Delta$ gives $\varepsilon$-DP with the following $\varepsilon$. 
$$\varepsilon = \ln \dmax{u, v \in {\cal V} \atop v' \in {\cal V}'}\dfrac{A_{u, v'}}{A_{v, v'}}$$ 
\end{theorem}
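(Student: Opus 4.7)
\textbf{Proof plan for Theorem~\ref{th:dp2}.}
The plan is to reduce the bound to a pointwise ratio on individual output tables, using the record-wise and independent nature of PRAM. Fix two private tables $\tau_{1},\tau_{2}\in{\cal T}$ that differ in at most one record; since the record count is public, this means there exists a single $r_{0}\in{\cal R}$ with $\tau_{1}(r_{0})=u$, $\tau_{2}(r_{0})=v$, and $\tau_{1}(r)=\tau_{2}(r)$ for every other $r$. I then condition on the permutation $\Pi=\pi$ and on the randomness of PRAM to compute the likelihood of each possible output $\tau'\in{\cal T}'$ exactly.

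Next, I would use the fact that in a PRAM mechanism each record is perturbed independently according to $A$, so the conditional probability of producing a given $\tau'$ from $\tau_{j}$ factorises as
\begin{equation*}
\Pr[\Delta(\tau_{j})\circ\pi=\tau'\mid \Pi=\pi]\;=\;\prod_{r\in{\cal R}} A_{\tau_{j}(r),\,\tau'(\pi(r))}.
\end{equation*}
All factors with $r\neq r_{0}$ are identical in the two tables and cancel in the ratio, leaving
\begin{equation*}
\frac{\Pr[\Delta(\tau_{1})\circ\pi=\tau'\mid\Pi=\pi]}{\Pr[\Delta(\tau_{2})\circ\pi=\tau'\mid\Pi=\pi]}\;=\;\frac{A_{u,\,\tau'(\pi(r_{0}))}}{A_{v,\,\tau'(\pi(r_{0}))}}\;\leq\;\max_{u,v\in{\cal V},\,v'\in{\cal V}'}\frac{A_{u,v'}}{A_{v,v'}}\;=\;e^{\varepsilon}.
\end{equation*}

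After that, I would remove the conditioning by marginalising over $\Pi$: because the above pointwise inequality holds uniformly in $\pi$, it is preserved by the convex combination $\sum_{\pi}\Pr[\Pi=\pi](\cdot)$, giving $\Pr[T'=\tau'\mid T=\tau_{1}]\leq e^{\varepsilon}\Pr[T'=\tau'\mid T=\tau_{2}]$ for every $\tau'$. Summing over $\tau'\in S$ for any measurable $S\subseteq{\cal T}'$ yields the differential-privacy inequality (\ref{fm:1}) with the claimed $\varepsilon$.

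The step I expect to require the most care is the treatment of the permutation $\Pi$: one must verify that the record-wise factorisation really does yield cancellation at the \emph{same} coordinate $\pi(r_{0})$ in both numerator and denominator, which in turn requires that $\Pi$ be applied to the records rather than to the values and that $\Pi$ be sampled independently of $T$ and $\Delta$, as guaranteed by the definition of a privacy mechanism. The algebraic part is essentially routine once this bookkeeping is done.
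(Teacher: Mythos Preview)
Your argument is correct and follows essentially the same line as the paper: factorise the PRAM output probability over records, cancel the common factors at $r\neq r_{0}$, and bound the remaining single-record ratio by $\max_{u,v,v'}A_{u,v'}/A_{v,v'}$. The only noteworthy difference is how the sum over the permutation $\Pi$ is handled. You condition on $\Pi=\pi$, establish the pointwise inequality for each $\pi$, and then marginalise via a convex-combination argument. The paper instead forms the ratio of the already-marginalised probabilities, writes it as $\bigl(\sum_{\pi}b_{\pi}x_{\pi}\bigr)\big/\bigl(\sum_{\pi}a_{\pi}x_{\pi}\bigr)$ with $x_{\pi}=\prod_{s\neq r_{0}}A_{\tau(s),\tau'(\pi(s))}$, and bounds this by $\max_{\pi}b_{\pi}/a_{\pi}$ using a separate auxiliary lemma (\reflm{3}) about extremising quotients of linear forms. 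Your route is slightly more elementary and self-contained; the paper's route reuses \reflm{3}, which it needs elsewhere for the $Pk$-anonymity analysis, so there is some economy in stating it once. Either way the substantive content is identical. One minor notational point: in the paper's convention $\Delta(T)=T'\circ\Pi$, so the event you condition on should be written $\Delta(\tau_{j})=\tau'\circ\pi$ rather than $\Delta(\tau_{j})\circ\pi=\tau'$; the factorisation you wrote is nonetheless the right one.
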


The theorem has been already shown but it may not be rigorous and 
suit our notations. Therefore, we give another proof of \refth{dp2} in \refapp{dp}.
We show a multi-attribute representation of \refth{dp2} below.  
Simply, $\varepsilon$ becomes the summation of each attribute's $\varepsilon$. 
\begin{corollary}
For any PRAM mechanism $\Delta$ whose transition probability matrices are $A_{a}$ for each attribute $a\in{\cal A}$,  
$\Delta$ gives $\varepsilon$-DP with the following $\varepsilon$. 
$$\varepsilon = \dsum{a\in{\cal A}}\ln \dmax{u, v \in {\cal V}_{a} \atop v' \in {\cal V}'_{a}}\dfrac{(A_{a})_{u, v'}}{(A_{a})_{v, v'}}$$
\end{corollary}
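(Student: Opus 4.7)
The plan is to reduce the multi-attribute case directly to \refth{dp2} by exploiting the fact that PRAM randomizes each attribute independently. Concretely, the overall transition probability matrix $A$ on $\mathcal{V} = \prod_{a\in\mathcal{A}} \mathcal{V}_a$ factorizes as
\begin{equation*}
A_{u, v'} \;=\; \prod_{a\in\mathcal{A}} (A_a)_{u_a, v'_a},
\end{equation*}
where $u_a$ and $v'_a$ denote the $a$-components of $u\in\mathcal{V}$ and $v'\in\mathcal{V}'$. This factorization is the key structural observation; I would either state it as an immediate consequence of the record-wise, attribute-wise definition of PRAM, or briefly justify it by noting that the per-attribute randomizations are performed independently.

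Given the factorization, I would apply \refth{dp2} to obtain
\begin{equation*}
\varepsilon \;=\; \ln \max_{\substack{u,v\in\mathcal{V} \\ v'\in\mathcal{V}'}} \frac{A_{u,v'}}{A_{v,v'}} \;=\; \ln \max_{\substack{u,v\in\mathcal{V} \\ v'\in\mathcal{V}'}} \prod_{a\in\mathcal{A}} \frac{(A_a)_{u_a, v'_a}}{(A_a)_{v_a, v'_a}}.
\end{equation*}
Since $u$, $v$, and $v'$ range independently over direct products and each factor depends only on the $a$-components, the maximum distributes over the product:
\begin{equation*}
\max_{\substack{u,v\in\mathcal{V} \\ v'\in\mathcal{V}'}} \prod_{a\in\mathcal{A}} \frac{(A_a)_{u_a, v'_a}}{(A_a)_{v_a, v'_a}} \;=\; \prod_{a\in\mathcal{A}} \max_{\substack{u_a,v_a\in\mathcal{V}_a \\ v'_a\in\mathcal{V}'_a}} \frac{(A_a)_{u_a, v'_a}}{(A_a)_{v_a, v'_a}}.
\end{equation*}
Taking the logarithm then converts the product into the desired sum, finishing the proof.

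The only mildly delicate step is justifying the distribution of $\max$ over the product, but this is routine because each factor is strictly positive (retention-replacement ratios are nonzero) and the index sets $(u_a, v_a, v'_a)$ across different attributes are disjoint, so each factor can be independently maximized. If non-positivity were a concern, one could argue attribute by attribute and invoke induction on $|\mathcal{A}|$, but I expect a one-line justification to suffice. Thus the main obstacle is essentially bookkeeping: making explicit the independence of per-attribute randomizations and the resulting factorization of $A$, after which the corollary is a direct specialization of \refth{dp2}.
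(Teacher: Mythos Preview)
Your proposal is correct and is essentially the natural expansion of what the paper does: the paper states the corollary immediately after \refth{dp2} with only the remark ``Simply, $\varepsilon$ becomes the summation of each attribute's $\varepsilon$,'' and your factorization-then-distribute-the-max argument is exactly the computation underlying that one-line justification. One minor point: your positivity remark cites retention-replacement specifically, but the corollary concerns general multi-attribute PRAM, so the cleaner statement is simply that each $(A_a)_{v,v'}>0$ (otherwise the bound is vacuously $\varepsilon=\infty$); with that in place the distribution of the maximum over the product is immediate.
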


Regarding retention-replacement perturbation, 
$\varepsilon$ is evaluated as follows. 
\begin{corollary}\label{cor:dp3}
For any retention-replacement perturbation $\Delta$ whose retention probabilities of each attribute are $\rho_{a}$, 
$\Delta$ gives $\varepsilon$-DP privacy with the following $\varepsilon$. 
\begin{equation}\label{eq:rrp_dp}
\varepsilon = \dsum{a\in{\cal A}}\ln\dfrac{1 + (|{\cal V}_{a}| - 1)\rho_{a}}{1 - \rho_{a}}
\end{equation}
\end{corollary}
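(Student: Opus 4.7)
The plan is to apply the preceding multi-attribute corollary directly: since retention-replacement perturbation is an instantiation of PRAM with the transition matrices $A_a$ specified in Section on retention-replacement, the only task is to evaluate $\max_{u,v\in {\cal V}_a,\, v'\in {\cal V}'_a}(A_a)_{u,v'}/(A_a)_{v,v'}$ for each attribute $a$ and to take the logarithm, then sum over attributes.

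First I would fix an attribute $a$ and observe that $A_a$ has only two distinct entries: the diagonal value $\rho_a + (1-\rho_a)/|{\cal V}_a|$ and the off-diagonal value $(1-\rho_a)/|{\cal V}_a|$. The former is strictly larger than the latter whenever $\rho_a > 0$. Hence for any fixed target value $v'$, the numerator $(A_a)_{u,v'}$ is maximized by choosing $u = v'$ (giving the diagonal entry), and the denominator $(A_a)_{v,v'}$ is minimized by choosing $v \neq v'$ (giving the off-diagonal entry); the latter choice is available whenever $|{\cal V}_a| \geq 2$, which may safely be assumed for any nontrivial attribute.

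Next I would carry out the straightforward simplification
$$\frac{\rho_a + (1-\rho_a)/|{\cal V}_a|}{(1-\rho_a)/|{\cal V}_a|} = \frac{|{\cal V}_a|\rho_a + 1 - \rho_a}{1-\rho_a} = \frac{1 + (|{\cal V}_a|-1)\rho_a}{1-\rho_a},$$
take the logarithm, and sum over $a \in {\cal A}$, which yields exactly the expression in \refeq{rrp_dp}.

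There is essentially no obstacle here: the statement is a direct specialization of the previous corollary to a matrix whose entries take only two values, so the maximization is immediate. The only points worth a brief check are the degenerate boundaries, namely $|{\cal V}_a| = 1$ (the attribute contributes $0$ to $\varepsilon$, as expected) and $\rho_a \to 1$ (the matrix tends to the identity and the ratio diverges, consistent with no randomization providing no differential privacy); both cases are consistent with the formula.
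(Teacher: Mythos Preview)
Your proposal is correct and is exactly the intended derivation: the paper states this result as an immediate corollary of the multi-attribute version of \refth{dp2} without giving a separate proof, and your computation of the maximum ratio from the two-valued structure of the retention-replacement transition matrix is precisely the specialization required.
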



\section{{\it P\kkk}-anonymity}\label{sec:pk-anonymity}

As the name suggests, 
$k$-anonymity represents anonymity among privacy notions. 
It is known that satisfying only anonymity is not enough to preserve privacy \cite{MGK06};
thus, 
further privacy notions that prevent attribute estimation were developed after $k$-anonymity. 
However, 
this \textit{never} means that \dquote{anonymity is unnecessary.}
These stronger privacy notions rely on the assumption that $k$-anonymity has already been satisfied. 
Therefore, the same as with deterministic microdata release, 
we consider anonymity as the first privacy requirement in randomization-based microdata release.  
Regarding randomization, however, 
anonymity has not yet been clarified.  
Obviously, 
this is a critical problem and should be solved as soon as possible. 

\subsection{Problem with {\it k}-anonymity on Randomization}

We now explain what occurs when one applies $k$-anonymity directly to a randomized table.
Imagine that one randomizes all records' quasi-identifiers uniformly randomly. 
Furthermore, suppose that the resulting table happens to have a record whose data are unique. 
The randomized table does not satisfy $k$-anonymity
because it has a unique record. 
However, 
an adversary cannot identify anyone's record (without knowledge of sensitive attributes)
since uniformly random values provide no information. 
In other words, 
the table should be considered as fully anonymous, although the table does not satisfy $k$-anonymity. 
Therefore, 
we need a new definition of $k$-anonymity applicable to randomization. 

\subsection{Intuitive Requirement}

To apply $k$-anonymity to randomization, 
we have to determine what kind of notion we should construct.
Intuitively, 
\dquote{no one can choose the correct record of a person with probability $1/k$} 
is the likely choice. 
However, we have to take into account an adversary's incorrect presumption. 
Regarding privacy, 
the problem is not only the leakage of correct information, 
but the creation of \textit{incorrect} information about a person. 
Since a person does not wish to reveal correct private information, 
neither the person nor the administrator of the database can resolve the adversary's misconception. 
Therefore, 
we require a stronger sentence, 
\dquote{no one {\em estimates} which person the record came from with more than $1/k$ probability.} 
Note that this second sentence involves the first sentence (no one can choose\ldots), 
because an adversary who correctly chooses the record of a person 
with probability $1/k$ is able to estimate the record at confidence $1/k$. 

\subsection{Background Knowledge of Adversary}
In the definition of $k$-anonymity, there is no adversary, 
and this definition is described as a simple condition to be satisfied in a table. 
This is convenient for measuring $k$-anonymity.
At the same time, however,
it makes the meaning of privacy unclear. 

Therefore, there is an adversary in our model of $Pk$-anonymity.
The probability of linkage is varied according to 
the background knowledge of the adversary.
In the $Pk$-anonymity model, 
an adversary's background knowledge is represented as 
a probabilistic function $f_{T}$\footnote{It means that the
adversary knows that the private table is $\tau_{1}$ 
with probability $x_{1}$, $\tau_{2}$ with $x_{2}$ and so on.
It is not a distribution of values in a specific table, 
but the distribution on the space of all tables.} 
on the private table.
$Pk$-anonymity requires the privacy mechanism of that
the probability of linkage is bounded by $1/k$
for {\em all} $f_{T}$.
It means that we deal with an adversary 
who has arbitrary knowledge about the private table:
The adversary might know the private table itself and incorrect private tables. 

We note that even if in the extreme case where 
the adversary knows the private table itself,
$Pk$-anonymity can be satisfied by using
the randomness in the privacy mechanisms.
Of course, we assume that the adversary knows the released table,
the anonymization algorithm, and parameters used in the system
in addition to the background knowledge.

\ignore{
\subsection{Adversary}

There is an adversary in our model of $Pk$-anonymity.
In the definition of $k$-anonymity, there is no adversary, 
and this definition is described as a simple condition to be satisfied in a table. 
This is convenient for measuring $k$-anonymity.
At the same time, however,
it makes the meaning of privacy unclear. 
We define $Pk$-anonymity the same as its intuitive semantics, 
\dquote{no one(adversary) estimates which person the record came from with more than $1/k$ probability,} 
to clarify its meaning. 

In the $Pk$-anonymity model, 
an adversary is characterized by his/her background knowledge about a private table. 
The knowledge is represented as a probabilistic function $f_{T}$ on the private table
(i.e., the adversary knows that the private table is $\tau_{1}$ 
with probability $x_{1}$, $\tau_{2}$ in $x_{2}$, \ldots.
Note that it is not a distribution of values in a specific table, 
but the distribution on the space of all tables). 

In fact, we do not believe we can restrict either an adversary's knowledge or his/her power. 
He/she may possess some knowledge about the target person;
moreover, he/she might have almost all knowledge about the raw private table. 
Therefore, we deal with arbitrarily strong adversaries who might possess arbitrary knowledge about the private table, 
including the private table itself 
and incorrect private tables. 
The only thing we rely on is the randomness in randomization. 
Needless to say, 
we assume an adversary can browse the released table 
and identify the anonymization algorithm and parameter used in the system. 
}

\subsection{Definition of {\it Pk}-anonymity}

We define our new anonymity, $Pk$-anonymity 
and $Pk$-anonymization, which is a privacy mechanism that always satisfies $Pk$-anonymity. 

First, we define an attack by an adversary with background knowledge, which
is represented as an estimation by the following probability, 
where $\tau'$ is a released table, $\Pi$ is a uniformly random injective map from ${\cal R}$ to ${\cal R}'$, $r\in  {\cal R}  $, $r'\in  {\cal R}  '$ and ${\Delta}   (T   ) = T   ' \circ    \Pi$. 
\begin{equation}
\pr[\Pi   (r) = r' | T' = \tau']
\end{equation}
The term ${\cal R}$ represents a set of individuals, 
and ${\cal R}'$ represents a set of record IDs (not necessarily explicit IDs. In anonymized microdata, it maybe just a location in storage.). 
The $\Pi$'s randomness represents that \dquote{an adversary has no knowledge of the linkage between individuals and the records in $\tau'$.}
Taken together, 
the above probability represents the following probability from the standpoint of an adversary who saw $\tau'$. 
$$\pr[\text{a person }r\text{'s record in }\tau'\text{ is }r']$$
We denote the above probability as ${\cal E}  (f_{T}, \tau  ', r, r')$. 

Next, we define $Pk$-anonymity.


\begin{definition}\label{def_pk}{\em(}{\it $Pk$-anonymity}{\em)}\\
Let ${\cal R}   $, $ {\cal V}$, ${\cal R}   '$, and ${\cal V}   '$ be finite sets, 
and $\Pi$ and $\Delta$ be random variables on ${\cal R} \to {\cal R}'$ and ${\cal T} \to ({\cal R} \to {\cal V}')$, respectively, 
where ${\cal T}$ denotes the set of tables on $({\cal R}   ,  {\cal V})$ and 
the notation ${\cal X}    \to    {\cal Y} $ denotes the set of all maps from ${\cal X}$ to ${\cal Y}$ for any set ${\cal X}$ and ${\cal Y}$. 
Furthermore, let $\tilde{\Delta}$ denote a 6-tuple $({\cal R}   , {\cal V}   , {\cal R}   ', {\cal V}   ', \Pi   , \Delta   )$. 

Then, for any real number $k \geq   1$ and a table $\tau'$ on $({\cal R}   ', {\cal V}   ')$, 
a pair $(\tilde{\Delta}, \tau   ')$ is said to satisfy {\it $Pk$-anonymity} (or to be $Pk$-anonymous) if and only if for any random variables $T$ of tables on $({\cal R}   , {\cal V})$ and $T'$ of tables on $({\cal R}   ', {\cal V}   ')$ such that $\tilde{\Delta}$ is a privacy mechanism from $T$ to $T'$, 
any record $r \in  {\cal R}  $ of the private table $T$ and any record $r' \in   {\cal R}  '$ of the released table $\tau'$, the following equation is satisfied. 
$$\pr[\Pi   (r) = r' | T' = \tau'] \leq \dfrac{1}{k}$$
\end{definition}

\begin{definition}{\em(}{\it $Pk$-anonymization algorithms}{\em)}\\
Let ${\cal R}   $, $ {\cal V}$, ${\cal R}   '$, ${\cal V}   '$, $\Pi$, $\Delta$, and $\tilde{\Delta}$ be the same as Definition~\ref{def_pk}, 
and let ${\cal T}'$ denote the set of all tables on $({\cal R}', {\cal V}')$. 

Then, for any real number $k \geq   1$, 
$\tilde{\Delta}$ is said to be a $Pk$-anonymization if and only if $(\tilde{\Delta}  , \tau  ')$ satisfies $Pk$-anonymity for any released table $\tau  '\in{\cal T}'$ such that there exists a private table $\tau$ on $({\cal R}   ,  {\cal V})$ which satisfies $\pr[\Delta(\tau) = \tau'\circ \Pi] \neq 0$. 
\end{definition}
we treat only $\Delta$ within 6-tuple of a privacy mechanism $({\cal R}   , {\cal V}   , {\cal R}   ', {\cal V}   ', \Pi   , \Delta   ) = \tilde{\Delta}$;
thus, 
we do not differentiate $\tilde{\Delta}$ and $\Delta$.

$Pk$-anonymity's direct meaning is 
\dquote{no one estimates which person the record came from with more than $1/k$ probability.} 
Intuitively, 
it seems to be similar to \dquote{no one can narrow down a person's record to less than $k$ records,} 
which is an intuitive concept of $k$-anonymity. 
This intuitive similarity can also be confirmed mathematically. 
Furthermore, 
as far as deterministic anonymization algorithms, such as $k$-anonymization algorithms, are concerned, 
two anonymity notions can be shown to be equivalent to each other.
Therefore, we say $k$-anonymity is satisfied in a randomized table if $Pk$-anonymity is satisfied in the table. 

\begin{theorem}\label{th:eq} 
For any positive integer $k$, privacy mechanism ${\Delta}$, and released table $\tau'$, the following relation holds
if ${\Delta}$ is deterministic, i.e., 
for any $\tau\in{\cal T}$, there exists unique anonymized table $\hat{\tau}$ and ${\Delta}(\tau) = \hat{\tau}$.
\begin{quote}
$\tau  '$ is $k$-anonymous $\Leftrightarrow  $ $({\Delta}  , \tau  ')$ is $Pk$-anonymous
\end{quote}
\end{theorem}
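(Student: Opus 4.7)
\medskip
\noindent\textbf{Proof plan.}
The plan is to reduce both implications to an explicit formula for $\pr[\Pi(r)=r'\mid T'=\tau']$ under the assumption that $\Delta$ is deterministic. Since $|{\cal R}|=|{\cal R}'|$, the random map $\Pi$ is a uniformly random bijection, and since $\Delta$ is deterministic we may regard $\hat{T}:=\Delta(T)$ as a deterministic function of $T$; then the event $\{T'=\tau'\}$ is precisely $\{\tau'\circ\Pi=\hat{T}\}$. Write $c_{v}:=|\{\hat{r}'\in{\cal R}' : \tau'(\hat{r}')=v\}|$ for the number of released records bearing value $v$; the $k$-anonymity of $\tau'$ then reads $c_{v}\ge k$ for every value $v$ occurring in $\tau'$.

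The key computation is to further condition on $T=\tau$. If the multisets of values in $\hat{\tau}=\Delta(\tau)$ and $\tau'$ differ, the conditioning event has probability zero; otherwise a direct counting argument shows that among the $\prod_{v}c_{v}!$ bijections $\pi$ with $\tau'\circ\pi=\hat{\tau}$, the number that send $r$ to $r'$ is $(c_{\tau'(r')}-1)!\prod_{v\neq\tau'(r')}c_{v}!$ when $\hat{\tau}(r)=\tau'(r')$ and $0$ otherwise. Combined with the uniformity of $\Pi$ this yields
\begin{equation*}
\pr[\Pi(r)=r' \mid T=\tau,\, T'=\tau'] \;=\; \frac{\mathbf{1}[\hat{\tau}(r)=\tau'(r')]}{c_{\tau'(r')}}.
\end{equation*}

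For the direction $(\Rightarrow)$ I would assume $\tau'$ is $k$-anonymous and average the previous display over the posterior of $T$ given $T'=\tau'$; since $1/c_{\tau'(r')}$ depends only on $\tau'$ it factors out, giving
\begin{equation*}
\pr[\Pi(r)=r' \mid T'=\tau'] \;=\; \frac{\pr[\hat{T}(r)=\tau'(r') \mid T'=\tau']}{c_{\tau'(r')}} \;\le\; \frac{1}{c_{\tau'(r')}} \;\le\; \frac{1}{k},
\end{equation*}
which is the $Pk$-anonymity condition. For the direction $(\Leftarrow)$ I would fix $r'\in{\cal R}'$, set $v:=\tau'(r')$, and use the hypothesis that $\tau'$ is realisable (there is some private $\tau$ with $\Delta(\tau)=\tau'\circ\pi$) to conclude that $\hat{\tau}$ and $\tau'$ share the same multiset of values, so some $r\in{\cal R}$ satisfies $\hat{\tau}(r)=v$. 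Taking $T$ to be the point mass at this $\tau$, the key computation gives $\pr[\Pi(r)=r'\mid T'=\tau']=1/c_{v}$, so $Pk$-anonymity forces $c_{v}\ge k$; since $r'$ was arbitrary, $\tau'$ is $k$-anonymous.

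The main obstacle will be the clean justification of the counting/symmetry step. The intuition that a uniform bijection, after being conditioned to send each level set of $\hat{\tau}$ onto the matching level set of $\tau'$, is uniform on each such level set is obvious, but making it precise requires care with the two-layer conditioning on $T$ and on $\Pi$ together with the reduction $\{T'=\tau'\}=\{\tau'\circ\Pi=\Delta(T)\}$, which is exactly the place where determinism of $\Delta$ is used; without it, one cannot replace a joint event in $(T,\Pi)$ by a constraint linking the two variables. A minor secondary issue is the degenerate case in which $T'=\tau'$ has zero probability under every admissible $T$, where the conditional probability is undefined and the statement is vacuous on both sides.
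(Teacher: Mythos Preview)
Your proposal is correct and follows essentially the same route as the paper: both arguments reduce the conditional probability $\pr[\Pi(r)=r'\mid T'=\tau']$ to the combinatorial count of bijections compatible with the level sets of $\tau'$, obtain the value $\mathbf{1}[\hat\tau(r)=\tau'(r')]/c_{\tau'(r')}$, bound it by $1/k$ for the forward direction, and realise it exactly via a point-mass prior for the converse. The paper packages the same computation slightly differently (summing over pairs $(\delta,\tau)$ and introducing auxiliary predicates $\Phi,\hat\Phi$), but the substance is identical.

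One small difference worth noting: the paper's forward implication (its Lemma~1) is stated and proved for an \emph{arbitrary} privacy mechanism $\Delta$, not just a deterministic one, by summing over realisations of $\Delta$ as well as of $T$; your version invokes determinism already in the $(\Rightarrow)$ direction to write $\{T'=\tau'\}=\{\tau'\circ\Pi=\Delta(T)\}$. For the theorem as stated this makes no difference, but if you want the stronger lemma you can recover it by conditioning on the realisation of $\Delta$ in addition to $T$ and repeating your counting verbatim. Also, your remark that the degenerate (non-realisable) case is ``vacuous on both sides'' is not quite right: $k$-anonymity of $\tau'$ is a property of $\tau'$ alone and remains meaningful even when $\tau'$ is unreachable, so strictly speaking the equivalence needs the realisability hypothesis you flagged---the paper's proof relies on it implicitly as well.
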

This theorem represents equality of $Pk$-anonymity and $k$-anonymity under the consideration of deterministic anonymization algorithms, 
which are the applicable field of $k$-anonymity.  
Therefore, $Pk$-anonymity is deemed as an extension of $k$-anonymity. 

\medskip
\noindent
{\it (Proof of \refth{eq})}\\
This theorem is shown with the following two lemmas.

\begin{lemma}\label{lm:ktoPk}
For any positive integer $k$,  
if a released table $\tau  '$ is $k$-anonymous, then
$({\Delta} , \tau  ')$ is $Pk$-anonymous for any privacy mechanism ${\Delta}$. 
\end{lemma}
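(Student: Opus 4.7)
The plan is to reduce the posterior probability $\Pr[\Pi(r) = r' \mid T' = \tau']$ to a purely combinatorial count over bijections, and then invoke $k$-anonymity of $\tau'$ to bound that count. Because $\Delta$ is deterministic, each private table $\tau$ has a unique image $\hat{\tau} := \Delta(\tau)$, and combined with $|\mathcal{R}| = |\mathcal{R}'|$ (as assumed in Section~\ref{sec:k-anonymity}), $\Pi$ is a uniform random bijection. The defining relation $\Delta(T) = T' \circ \Pi$ then makes the event $T' = \tau'$ equivalent to $\tau' \circ \Pi = \hat{\tau}$ conditional on $T = \tau$, which is the key structural simplification.

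Fix any prior $f_T$. By Bayes' rule the target probability equals
\[
\frac{\sum_\tau f_T(\tau)\,\Pr_\Pi[\Pi(r) = r' \wedge \tau' \circ \Pi = \hat{\tau}]}{\sum_\tau f_T(\tau)\,\Pr_\Pi[\tau' \circ \Pi = \hat{\tau}]}.
\]
Partitioning $\mathcal{R}$ and $\mathcal{R}'$ by the value assigned under $\hat{\tau}$ and $\tau'$ respectively, the number of bijections $\pi$ with $\tau' \circ \pi = \hat{\tau}$ equals $\prod_v n_v!$ when $\hat{\tau}$ and $\tau'$ agree as multisets and is zero otherwise, where $n_v$ denotes the multiplicity of value $v$ in $\tau'$. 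Adding the pin $\pi(r) = r'$ forces $\hat{\tau}(r) = \tau'(r') =: v_0$ and reduces the count by a factor of exactly $n_{v_0}$. Cancelling the common factor $\prod_v n_v!/|\mathcal{R}|!$, the ratio simplifies to $(1/n_{v_0}) \cdot \Pr[T \in \mathcal{T}^{**}]/\Pr[T \in \mathcal{T}^*]$, where $\mathcal{T}^*$ is the set of private tables whose $\hat{\tau}$ matches $\tau'$ as a multiset and $\mathcal{T}^{**} \subseteq \mathcal{T}^*$ further requires $\hat{\tau}(r) = v_0$. Since $\mathcal{T}^{**} \subseteq \mathcal{T}^*$ the probability ratio is at most $1$, and $k$-anonymity of $\tau'$ gives $n_{v_0} \geq k$, yielding the desired $1/k$ bound.

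The main obstacle is the bookkeeping of tables $\tau$ whose $\hat{\tau}$ has a different multiset from $\tau'$: such $\tau$ contribute zero to both numerator and denominator and must be excluded consistently so that the cancellation of the common combinatorial factor is rigorous. Once this is handled, the intuition is clean --- given $T' = \tau'$, the uniform randomness of $\Pi$ makes the posterior on $\Pi(r)$ uniform over the $n_{v_0}$ released records sharing the value $v_0$ with $r$'s anonymized image, and $k$-anonymity guarantees that at least $k$ such records exist.
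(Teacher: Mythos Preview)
Your proof has a genuine gap: you assume $\Delta$ is deterministic, but Lemma~\ref{lm:ktoPk} is stated for \emph{any} privacy mechanism $\Delta$, randomized or not. (The determinism hypothesis belongs to Theorem~\ref{th:eq} and to Lemma~\ref{lm:Pktok}, the converse direction; the paper explicitly remarks that Lemma~\ref{lm:ktoPk} says ``$k \Rightarrow Pk$ always.'') As written, your Bayes expansion with $\hat{\tau} := \Delta(\tau)$ a single fixed table is only valid when $\Delta(\tau)$ is not itself a random variable.

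The repair is straightforward and is exactly what the paper does: introduce an additional sum over the realizations $\delta$ of $\Delta$, weighted by $f_\Delta(\delta)$, so that the ratio becomes
\[
\frac{\sum_{\delta,\tau} f_\Delta(\delta)\, f_T(\tau)\,\Pr_\Pi[\Pi(r) = r' \wedge \tau' \circ \Pi = \delta(\tau)]}{\sum_{\delta,\tau} f_\Delta(\delta)\, f_T(\tau)\,\Pr_\Pi[\tau' \circ \Pi = \delta(\tau)]}.
\]
Your combinatorial count --- $\prod_v n_v!$ bijections compatible with a given anonymized table, reduced by a factor of $n_{v_0}$ once $\pi(r)=r'$ is pinned --- then applies verbatim to each pair $(\delta,\tau)$ in place of each $\tau$, and the inclusion of the pinned event in the unpinned one (your $\mathcal{T}^{**} \subseteq \mathcal{T}^*$; the paper's $\hat{\Phi} \Rightarrow \Phi$) gives the bound $1/n_{v_0} \le 1/k$ exactly as you argue. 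Apart from this missing outer sum over $\delta$, your argument is essentially identical to the paper's.
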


\begin{lemma}\label{lm:Pktok}
For any real number $t \geq  1$, positive number $k$ such that $k \leq   t$,  
any deterministic privacy mechanism ${\Delta} $, and released table $\tau  '$, 
if $({\Delta}   , \tau  ')$ is $Pt$-anonymous, then $\tau  '$ is $k$-anonymous. \\
\end{lemma}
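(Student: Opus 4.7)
The plan is to reduce the claim to an explicit probability computation by picking a particularly informative prior for the adversary. I would prove the contrapositive in the following form: for deterministic $\Delta$, $Pt$-anonymity of $(\Delta,\tau')$ forces every value class of $\tau'$ to contain at least $t$ records, which is $t$-anonymity and hence $k$-anonymity since $k \leq t$.

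First I would choose a private table $\tau_{0}$ such that $\Delta(\tau_{0}) = \tau' \circ \pi_{0}$ for some permutation $\pi_{0}$; such a $\tau_{0}$ exists whenever $\tau'$ is actually reachable through the mechanism (otherwise the conditioning event $\{T'=\tau'\}$ has probability zero for every choice of $T$, and the claim is vacuous). Let $T$ be the point mass on $\tau_{0}$, modeling an adversary who knows the private table exactly. Since $\Delta$ is deterministic, writing $\delta \defeq \Delta(\tau_{0})$, the privacy-mechanism equation $\Delta(T)=T'\circ\Pi$ collapses to $T' = \delta\circ\Pi^{-1}$, so all randomness in $T'$ comes from the uniformly random permutation $\Pi$.

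Next I would compute $\pr[\Pi(r)=r' \mid T'=\tau']$. Conditioning on $T'=\tau'$ is equivalent to restricting $\Pi$ to the set $S_{\tau'} = \{\pi : \tau'\circ\pi = \delta\}$; because $\Pi$ is a priori uniform and independent of $T$, its conditional distribution remains uniform on $S_{\tau'}$. By the choice of $\tau_{0}$, $S_{\tau'}$ is nonempty, and it decomposes as a product of symmetric groups indexed by value classes: a permutation lies in $S_{\tau'}$ iff it bijectively maps the $n_{v}$ records of value $v$ in $\delta$ onto the $n_{v}$ records of value $v$ in $\tau'$ (the multiplicities must agree). A direct count then yields
\[
\pr[\Pi(r)=r' \mid T'=\tau'] \;=\; \dfrac{(n_{v}-1)!\,\prod_{u\neq v}n_{u}!}{\prod_{u}n_{u}!} \;=\; \dfrac{1}{n_{v}}
\]
whenever $\tau'(r')=\delta(r)=v$, and zero otherwise (which trivially respects any bound).

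Applying $Pt$-anonymity to these probabilities gives $n_{v}\geq t$ for every value $v$ appearing in $\tau'$; together with $|{\cal R}|=|{\cal R}'|$, this is exactly $t$-anonymity of $\tau'$, and hence $k$-anonymity since $k\leq t$. The main obstacle is the bookkeeping surrounding the conditional distribution: one must carefully justify that conditioning on $\{T'=\tau'\}$ — an event on $\Pi$ through the identity $\delta\circ\Pi^{-1}=\tau'$ — preserves uniformity of $\Pi$ on $S_{\tau'}$, and confirm that the 6-tuple $({\cal R},{\cal V},{\cal R}',{\cal V}',\Pi,\Delta)$ with the point-mass $T$ truly constitutes a valid privacy mechanism in the sense of Definition~\ref{def_pk}.
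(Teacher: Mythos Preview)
Your proposal is correct and follows essentially the same approach as the paper: both use a point-mass prior $f_{T}(\tau_{0})=1$ on a private table compatible with $\tau'$, and both compute $\pr[\Pi(r)=r'\mid T'=\tau']=1/\sharp_{\tau'}(\tau'(r'))$ to force each value class to have size at least $t$. The only cosmetic difference is that the paper reuses the $\Phi$/$\hat\Phi$ formula established in the proof of Lemma~\ref{lm:ktoPk} and then simplifies via determinism, whereas you give the same $1/n_{v}$ count directly from the uniform distribution of $\Pi$ on $S_{\tau'}$; your handling of the existence of $\tau_{0}$ and the vacuous case is in fact more explicit than the paper's.
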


Roughly, 
\reflm{ktoPk} states that \dquote{$k\Rightarrow Pk$ always,}
and \reflm{Pktok} states that \dquote{$Pk\Rightarrow k$ if an anonymization algorithm is deterministic.}


\medskip
\noindent
({\it Proof of \reflm{ktoPk}})\label{ap:lm1}\\
First, we use notation $\sharp   _{\tau   '}(v')$ as $|\tau   '^{-1}(\{   v'\}   )|$, 
and say $r'$ is $k$-anonymous in $\tau  '$ if $\sharp   _{\tau   '}(\tau   '(r')) \geq   k$. 
Then, 
$k$-anonymity of $\tau'$ is represented as
\dquote{$r'$ is $k$-anonymous in $\tau  '$ for any $r'\in   {\cal R}   '$.}

As mentioned in \refsec{pk-anonymity}, 
we show \reflm{ktoPk} and \reflm{Pktok}. 
Note that the following equality holds by definition. 
\begin{equation}\label{eq:delta_pi}
\Delta   (T   ) = T   ' \circ    \Pi
\end{equation}

We show that an estimation probability, ${\cal E} (f_{T}, \tau', r, r')$,
is equal to or less than $1/k$.
For any background knowledge $f_{T} : {\cal T}  \to  \mathbb{R}$, 
any $r \in {\cal R}$ and any $r' \in {\cal R}'$, the following equations hold.\\

\dhead{{\cal E}   (f_{T}, \tau   ', r, r') = \pr[\Pi   (r) = r' | T' = \tau   ']}

\dcontx{ = \dfrac{\pr[\Pi   (r) = r' \land    T' = \tau   ']}{\pr[T' = \tau   ']}
 = \dfrac{\pr[\Pi   (r) = r' \land    \Delta   (T) = \tau   ' \circ    \Pi   ]}{\pr[\Delta   (T) = \tau   ' \circ    \Pi   ]}
}
{from \refeq{delta_pi}}

\dcontx{= \dfrac{ \dsum   { \delta    : {\cal T} \to    ({\cal R}    \to    {\cal V}   ') \atop \tau\in{\cal T}   }{\!\!\!\!\!\! f_{\Delta   }(\delta   ) f_{T}(\tau   ) \pr[\Pi   (r) = r' \land    \delta   (\tau   ) = \tau   ' \circ    \Pi   ]}}
{\dsum   {\delta    : {\cal T} \to    ({\cal R}    \to    {\cal V}   ') \atop \tau    \in{\cal T}   }{\!\!\!\!\!\! f_{\Delta   }(\delta   ) f_{T}(\tau   ) \pr[\delta   (\tau   ) = \tau   ' \circ    \Pi   ]}}}{since $T, \Delta$, and $\Pi  $ are independent of each other}

We define two propositions $\Phi   (\delta   , \tau   )$ and $\hat{\Phi   }(\delta   , \tau   )$ as
\begin{equation*}
\Phi   (\delta   , \tau   ) = [\text{There exists } \hat{\pi   } : {\cal R}    \to    {\cal R}   ' \text{ such that } \delta   (\tau   ) = \tau   ' \circ    \hat{\pi  }]
\end{equation*}
\begin{equation*}
\hat{\Phi   }(\delta   , \tau   ) = [\Phi   (\delta   , \tau   ) \text{ and } (\delta   (\tau   ))(r) = \tau   '(r')]
\end{equation*}
respectively.  
Since $\Pi$ is a uniformly random permutation,
the following equations hold. \\

\dhead{\pr[\delta   (\tau   ) = \tau   ' \circ    \Pi   ]
 = 
\begin{cases}
\dfrac{
	\dprod{v' \in \im(\tau')}
	{\sharp_{\tau'}(v')!}}{|{\cal R}|!} & (\text{if } \Phi \text{ holds})\\
0 & (\text{otherwise})
\end{cases}}

\dhead{\pr[\Pi   (r) = r' \land    \delta   (\tau   ) = \tau   ' \circ    \Pi   ]}
 
\dcont{=
\begin{cases}
\dfrac{
	\left(
		\sharp_{\tau'}(\tau'(r')) - 1
	\right)!
	\!\!\!\!\!\!\!\!
			\dprod{v' \in \im(\tau') \setminus \{\tau'(r')\}}
		\!\!\!\!\!\!\!\!\sharp_{\tau'}(v')!
}
	{ |{\cal R}|!}
		 & (\text{if }\hat{\Phi}\text{ holds})\\
0 & (\text{otherwise})
\end{cases}
}
\dcont{=
\begin{cases}
\dfrac{
	\dprod{v' \in \im(\tau')}
		\sharp_{\tau'}(v')!}
	{\sharp_{\tau'}(\tau'(r')) |{\cal R}|!}
		 & (\text{if }\hat{\Phi}\text{ holds})\\
0 & (\text{otherwise})
\end{cases}
}

Therefore, the primary equation $\pr[\Pi   (r) = r' | T' = \tau   ')$ is transformed as 

\dhead{\dfrac{\dsum   { \hat{\Phi}(\delta   , \tau   )}{f_{\Delta   }(\delta   ) f_{T}(\tau   ) \dfrac{\dprod   {v' \in    \im(\tau   ')}{\sharp   _{\tau   '}(\tau   '(s'))!}}{\sharp   _{\tau   '}(\tau   '(r'))|{\cal R}   |!}}}
{\dsum   {\Phi   (\delta   , \tau   )}{f_{\Delta   }(\delta   ) f_{T}(\tau   ) \dfrac{\dprod   {v' \in    \im(\tau   ')}{\sharp   _{\tau   '}(\tau   '(s'))!}}{|{\cal R}   |!}}}
\ \ \leq    \ \ 
\dfrac{\dsum   { \Phi   (\delta   , \tau   ))}{f_{\Delta   }(\delta   ) f_{T}(\tau   ) \dfrac{\dprod   {v' \in    \im(\tau   ')}{\sharp   _{\tau   '}(\tau   '(s'))!}}{\sharp   _{\tau   '}(\tau   '(r'))|{\cal R}   |!}}}
{\dsum   {\Phi   (\delta   , \tau   )}{f_{\Delta   }(\delta   ) f_{T}(\tau   ) \dfrac{\dprod   {v' \in    \im(\tau   ')}{\sharp   _{\tau   '}(\tau   '(s'))!}}{|{\cal R}   |!}}}
}
\fright{(since $\hat{\Phi} \Rightarrow \Phi$)}

\dcont{ = \dfrac{1}{\sharp   _{\tau   '}(\tau   '(r'))} \leq    \frac{1}{k}.}
\fright{(from $k$-anonymity)}

\hfill$\Box$(\reflm{ktoPk})

\medskip
\noindent
({\it Proof of \reflm{Pktok}})\label{ap:lm2}\\
In the proof we use and show the following contraposition. 

\begin{quote}{\em
For any privacy mechanism $\Delta$, if $\tau'$ is not $k$-anonymous, 
then $(\Delta, \tau')$ is also not $Pt$-anonymous.
}\end{quote}

We consider the background knowledge, $f_{T}$, satisfying $f_{T}(\tau) = 1$. 
Let $r'\in {\cal R}'$ be a record that is not $k$-anonymous in $\tau  '$
and that satisfies $r\in   \pi   ^{-1}(r')$. 

As in the proof of \reflm{ktoPk}, the following equation holds.

\[
{\cal E}   (f_{T}, \tau   ', r, r')
	= \dfrac{
		\dsum   {\hat{\Phi   }(\delta   , \tau   )}{f_{\Delta   }(\delta   ) f_{T}(\tau   )}
			\dfrac{\dprod   {v' \in    \im(\tau   ')}{\sharp   _{\tau   '}(v')!}}
												{\sharp   _{\tau   '}(\tau   '(r'))|{\cal R}   |!}
	}{
		\dsum   {\Phi   (\delta   , \tau   )}{f_{\Delta   }(\delta   ) f_{T}(\tau   )} 
			\dfrac{\dprod   {v' \in  \im(\tau   ')}{\sharp_{\tau   '}(v')!}}{|{\cal R} |!}}
\]
Since $\Delta  $ is deterministic and $f_{T}(\tau   ) = 1$,
we transform the above equation as follows.

\[
	\dfrac{
		\dsum   {\hat{\Phi   }(\delta   , \tau   )}{f_{\Delta   }(\delta   ) f_{T}(\tau   )}
			\dfrac{\dprod   {v' \in    \im(\tau   ')}{\sharp   _{\tau   '}(v')!}}
												{\sharp   _{\tau   '}(\tau   '(r'))|{\cal R}   |!}
	}{
		\dsum   {\Phi   (\delta   , \tau   )}{f_{\Delta   }(\delta   ) f_{T}(\tau   )} 
			\dfrac{\dprod   {v' \in    \im(\tau   ')}{\sharp   _{\tau   '}(v')!}}{|{\cal R}   |!}
	}
	=
 \dfrac{
		\dfrac{\dprod   {v' \in    \im(\tau   ')}{\sharp   _{\tau   '}(v')!}}
												{\sharp   _{\tau   '}(\tau   '(r'))|{\cal R}   |!}
	}{
		\dfrac{\dprod   {v' \in    \im(\tau   ')}{\sharp   _{\tau   '}(v')!}}{|{\cal R}   |!}
	}
=
\dfrac{1}{\sharp   _{\tau   '}(\tau   '(r'))}.
\]
We assume $r'$ is not $k$-anonymous; 
therefore, $\dfrac{1}{\sharp   _{\tau   '}(\tau   '(r'))} \gneq \dfrac{1}{k}$.
\fright{ 
$\Box  $(\reflm{Pktok})}

The above two lemmata immediately imply \refth{eq}.
\begin{flushright}
$\Box   $
\end{flushright}

Furthermore, $k$-anonymization and $Pk$-anonymization also have a similar equality.  

\begin{corollary}
For any positive integer $k$ and privacy mechanism ${\Delta}$, 
if ${\Delta} $ is deterministic, the following holds.

\begin{quote}
${\Delta}  $ is $k$-anonymization $\Leftrightarrow  $ ${\Delta} $ is $Pk$-anonymization
\end{quote}
\end{corollary}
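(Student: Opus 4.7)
\medskip
\noindent
\textit{(Proof proposal.)}
The plan is to reduce the corollary directly to \refth{eq} applied pointwise over all reachable released tables. Unwinding the two definitions, $\Delta$ is a $Pk$-anonymization iff $(\Delta, \tau')$ is $Pk$-anonymous for every $\tau'\in{\cal T}'$ with $\Pr[\Delta(\tau) = \tau'\circ \Pi] \neq 0$ for some private $\tau$, while (by the analogous reading of the $k$-anonymization condition) $\Delta$ is a $k$-anonymization iff every such reachable $\tau'$ is itself $k$-anonymous. So the claim reduces to verifying that these two universal quantifiers range over the same collection of tables $\tau'$, and that for each such $\tau'$ the inner conditions are equivalent.

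The inner equivalence is handed to me directly by \refth{eq}, which already establishes that for a deterministic privacy mechanism $\Delta$ and any fixed released table $\tau'$, the table $\tau'$ is $k$-anonymous iff $(\Delta, \tau')$ is $Pk$-anonymous. For the matching of quantifier ranges, I would exploit the determinism of $\Delta$: for each private $\tau$ there is a unique image $\hat{\tau} = \Delta(\tau)$, so $\Pr[\Delta(\tau) = \tau'\circ \Pi] \neq 0$ holds exactly when $\tau'$ is a rearrangement of $\hat{\tau}$ by some permutation in the support of $\Pi$. Since $k$-anonymity depends only on the multiset of rows, it is invariant under all such rearrangements, and hence both senses of reachability pick out the same collection of tables modulo $k$-anonymity.

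Conjoining the pointwise equivalences over this common range immediately yields the corollary. The argument is essentially one of bookkeeping rather than a substantive obstacle; the only point that needs any care is checking that composing with the permutation $\Pi$ cannot affect $k$-anonymity, which is clear from its definition via row multisets. Accordingly, I do not anticipate a genuinely hard step beyond the careful invocation of \refth{eq} for each reachable $\tau'$ and the verification that the two notions of reachability coincide under determinism of $\Delta$.
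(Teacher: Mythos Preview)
Your proposal is correct and matches the paper's intent: the corollary is stated immediately after \refth{eq} with no separate proof, so the paper treats it as a direct consequence, and your pointwise reduction to \refth{eq} over all reachable $\tau'$ is exactly the intended argument. Your extra care about matching the quantifier ranges and permutation-invariance of $k$-anonymity is slightly more than the paper spells out, but it is harmless bookkeeping rather than a deviation in approach.
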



Through \refth{eq},
we have seen that $Pk$-anonymity is an exact mathematical extension of $k$-anonymity.
Moreover, the intuitive meaning of $k$-anonymity, 
\dquote{no one can narrow down a person's record to less than $k$ records} 
is applicable from the following viewpoint.
Under a privacy mechanism ${\Delta}$ and a certain released table $\tau'$, 
an adversary's estimation ${\cal E}(f_{T}, \tau', r, r')$ is $1/k$ or less for any $r\in{\cal R}$ and $r'\in{\cal R}'$, 
when $({\Delta}, k)$ is $Pk$-anonymous.
Then by definition, for any $k - 1$ records $\{ r'_{i}\}_{0 \leq i < k - 1}$ in $\tau'$, the following relation holds. 
$$\dsum{0 \leq i < k - 1}{\cal E}(f_{T}, \tau', r, r'_{i}) \leq 1 - \dfrac{1}{k} \lneq 1$$
This relation means that when one has chosen $k - 1$ records from $\tau'$, 
there is always $1/k$ probability that $r$ is not in these $k - 1$ records in $\tau'$.
This precisely means that \dquote{no one can narrow down a person's record to less than $k$ records.}\\

Remember that an adversary is considered as background knowledge and a distribution.
In the field of cryptography, an adversary is often represented as an algorithm.  
We show that an adversary represented as a probabilistic algorithm $M$ that takes inputs as $(\tau  ', r)$ cannot select $r$'s record in a released table with a higher probability than $1/k$.

\begin{proposition}
For any $Pk$-anonymization ${\Delta}$, $\tau\in T$, $\tau'\in T'$ such that ${\Delta}(T) = T'\circ\Pi$, $r\in {\cal R}$ and probabilistic algorithm $M$ that takes $\tau$ and $r$ as inputs, 
$M$ do not select $r'\in{\cal R'}$ such that $\Pi(r) = r'$ with a higher probability than $1/k$.
\end{proposition}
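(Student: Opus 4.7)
The plan is to reduce the success probability of any such algorithm $M$ to the quantity ${\cal E}(f_{T}, \tau', r, r') = \pr[\Pi(r) = r' \mid T' = \tau']$ that is already controlled by $Pk$-anonymity, and then conclude by a convex-combination argument. Note that the intended reading is $M(\tau', r)$ (the adversary sees the released table and a target individual), so the algorithm's only access to the private world is through $\tau'$; the internal coin tosses of $M$ can be taken to be independent of $T$, $\Pi$, and $\Delta$.

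First, I would fix an arbitrary background knowledge $f_T$ and denote by $q_{\tau', r}(r') \defeq \pr[M(\tau', r) = r']$ the output distribution of $M$, which depends only on $\tau'$, $r$, and $M$'s own coins. By the law of total probability,
\[
\pr[M(T', r) = \Pi(r)] = \dsum{\tau' \in {\cal T}'} \pr[T' = \tau']\,\pr[M(\tau', r) = \Pi(r) \mid T' = \tau'].
\]
Since $M$'s randomness is independent of $\Pi$ given $T'$, the inner probability factors as
\[
\pr[M(\tau', r) = \Pi(r) \mid T' = \tau'] = \dsum{r' \in {\cal R}'} q_{\tau', r}(r')\,\pr[\Pi(r) = r' \mid T' = \tau'].
\]

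Second, I would invoke $Pk$-anonymity of $\tilde{\Delta}$: for every released table $\tau'$ that arises with positive probability and every $r, r'$, Definition~\ref{def_pk} gives $\pr[\Pi(r) = r' \mid T' = \tau'] \leq 1/k$. Substituting this bound and using $\sum_{r'} q_{\tau', r}(r') = 1$ yields
\[
\pr[M(\tau', r) = \Pi(r) \mid T' = \tau'] \leq \dfrac{1}{k}\dsum{r' \in {\cal R}'} q_{\tau', r}(r') = \dfrac{1}{k}.
\]
Averaging this pointwise bound over $\tau'$ with weights $\pr[T' = \tau']$ preserves the inequality, giving $\pr[M(T', r) = \Pi(r)] \leq 1/k$, which is exactly the claim.

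The argument is essentially a reduction, so there is no deep obstacle; the only subtle point I would take care to justify explicitly is the conditional independence of $M$'s randomness from $\Pi$ given $T'$. This is immediate in the model, because $M$ sees only $\tau'$ and $r$, and its coins are a fresh probability space disjoint from $(T, \Pi, \Delta)$. Once this is stated cleanly, the two displayed lines above complete the proof, and the result can be read as saying that $Pk$-anonymity not only restricts identification by the idealized posterior $\pr[\Pi(r) = r' \mid T' = \tau']$ but also dominates the success probability of any concrete attacker algorithm.
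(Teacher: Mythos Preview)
Your argument is correct and follows essentially the same route as the paper's proof: both reduce the algorithm's success probability to the posterior $\pr[\Pi(r)=r'\mid T'=\tau']$, bound it by $1/k$ via $Pk$-anonymity, and conclude that any (randomized) choice of $r'$ succeeds with probability at most $1/k$. The only cosmetic difference is that the paper explicitly fixes the degenerate prior $f_T(\tau_t)=1$ so that the $Pk$-anonymity posterior coincides with the \emph{true} conditional probability for the fixed private table, whereas you phrase the bound for an arbitrary $f_T$ and then take the convex combination over $r'$ (and over $\tau'$) explicitly; your intermediate inequality $\pr[M(\tau',r)=\Pi(r)\mid T'=\tau']\le 1/k$ already gives the pointwise statement the proposition asks for.
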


\medskip
\noindent
{\it (Proof of Proposition 1)\\}
Let $f_{T}$ be the following probability function. 
$$\pr[T = \tau] = \begin{cases}
1 & \text{ if }\tau = \tau_{t}\\
0 & \text{ otherwise}
\end{cases}$$
Under this $f_{T}$, 
the probability $\pr[\Pi(r) = r'|T' = \tau']$ is not only an adversary's estimate, 
but also the true probability. 
On the other hand, it is $1/k$ or smaller by $Pk$-anonymity;
therefore, 
no function selects $r'$ with a higher probability than $1/k$, 
and $M$ is only a function.
\\\hfill$\Box$

\if0
\begin{definition}
We say $M$ {\em may} be an estimator of a privacy mechanism $\Delta$ if and only if it satisfies the following.
\quote{
For any released table $\tau  '$ , $r\in  {\cal R}  $ and $r'\in  {\cal R}  '$, there exists a distribution of a private table $f_{T} : {\cal T}  \to  \mathbb{R}$ such that $M(\tau  ', r, r') = {\cal E}(f_{T}, \tau  ', r, r')$}.
\end{definition}

Trivially, the following proposition holds.
\begin{proposition}
For any privacy mechanism $\Delta$, any positive number $k$, 
any $\tau'\in{\cal T}'$, any $r\in{\cal R}$ and any \mbox{$r'\in{\cal R}'$}, 
if ${\cal E}  (f_{T}, \tau  ', r, r') \leq  k$ holds for any $f_{T} : {\cal T}  \to  \mathbb{R}$, 
then for any $M$ that may be an estimator of $\Delta$, $M(\tau  ', r, r') \leq  k$ also holds.
\end{proposition}

When $\Delta  $ is determined, 
$\tau  $ is the only parameter to determine $\tau  '$, 
so that for any released table $\tau  '$ by $\Delta  $, $r\in  {\cal R}  $ and $r'\in  {\cal R}  '$, 
$f_{T}$ determines the {\it exact} probability ${\cal E}(f_{T}, \tau  ', r, r')$. 
Therefore, if the algorithm is a mathematically correct probabilistic estimator, 
it will output ${\cal E}(f_{T}, \tau  ', r, r')$.

From this proposition, we can say that
it is sufficient to confirm ${\cal E}  (f_{T}, \tau  ', r, r') \leq  k$ for any $f_{T} : {\cal T}  \to  \mathbb{R}$,
to guarantee that no correct estimator can estimate the position of $r$ in $\tau'$.
From this reason, 
we call $f_{T}$ an adversary in this paper. 

We must note that we do not
and need not know whether or not the correct estimator exists for a privacy mechanism
because the algorithms that output ${\cal E}  (f_{T}, \tau  ', r, r') \gneq  k$ are never correct because of the contraposition of Proposition~1. 
\fi

\section{Applying {\it Pk}-anonymity (and DP) to PRAM}\label{sec:application}\label{sec:apply}

We apply $Pk$-anonymity to PRAM. 
First, we show a theorem on general PRAM for calculating $k$. 
Next, we describe a more concrete formula on retention-replacement perturbation introduced in \refsec{rrp}. 
Finally, combining existing result,
we propose an algorithm to satisfy both $Pk$-anonymity and $\varepsilon$-DP.

We assume ${\cal V}    = {\cal V}   '$.
The privacy mechanism ${\Delta}$ is defined along with PRAM, 
i.e., defined for any $r \in   {\cal R}   $ and $v' \in    {\cal V}   '$, as follows.
\begin{equation*}
f_{({\Delta}   (T))(r)}(v') = {A_{T(r), v'}}
\end{equation*}
We call such a privacy mechanism a PRAM mechanism.

\begin{theorem}$($ $Pk$-anonymity on PRAM$)$\\\label{th:2}
A PRAM mechanism whose transition probability matrix is $A$ is a $Pk$-anonymization if and only if $k$ is described as follows.
\begin{equation*}
k \leq 1 + (|{\cal R}| - 1) \dmin{u, v\in{\cal V} \atop u', v' \in{\cal V}'}\dfrac{A_{u, v'}A_{v, u'}}{A_{u, u'}A_{v, v'}}
\end{equation*}
\end{theorem}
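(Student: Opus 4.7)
The plan is to compute $\Pr[\Pi(r)=r'\mid T'=\tau']$ in closed form using PRAM's record-wise factorisation and the uniformity of $\Pi$, and then to bound the resulting ratio via a permutation-swap bijection. By the mutual independence of $T$, $\Pi$, $\Delta$ together with $\Pr[\Delta(\tau)=\tau'\circ\pi]=\prod_{s\in\mathcal{R}}A_{\tau(s),\tau'(\pi(s))}$, Bayes' rule and marginalisation over $\hat r'\in\mathcal{R}'$ give
$$\Pr[\Pi(r)=r'\mid T'=\tau'] \;=\; \frac{P(r,r')}{\sum_{\hat r'\in\mathcal{R}'}P(r,\hat r')},\qquad P(r,\hat r'):=\sum_{\tau\in\mathcal{T}}f_T(\tau)\!\!\sum_{\pi:\pi(r)=\hat r'}\prod_{s\in\mathcal{R}}A_{\tau(s),\tau'(\pi(s))}.$$

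For the sufficiency direction, i.e.\ $k\leq 1+(|\mathcal{R}|-1)M$ implies $Pk$-anonymity (where $M$ denotes the minimum in the statement), I would, for each $\hat r'\neq r'$ and each fixed $\tau$, use the involution on permutations that swaps the images of $r$ and $s(\pi):=\pi^{-1}(\hat r')$. This is a bijection between $\{\pi:\pi(r)=r'\}$ and $\{\pi':\pi'(r)=\hat r'\}$, and under it only the $r$- and $s(\pi)$-factors of the $A$-product change, so the term-wise ratio equals
$$\frac{A_{\tau(r),\tau'(\hat r')}\,A_{\tau(s(\pi)),\tau'(r')}}{A_{\tau(r),\tau'(r')}\,A_{\tau(s(\pi)),\tau'(\hat r')}}\;\geq\;M.$$
Summing term-wise and then over $\tau$ yields $P(r,\hat r')\geq M\cdot P(r,r')$ for each of the $|\mathcal{R}|-1$ choices of $\hat r'\neq r'$, whence the posterior is at most $1/(1+(|\mathcal{R}|-1)M)\leq 1/k$.

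For the necessity direction I would exhibit a worst-case witness. Let $(u,v,u',v')$ attain $M$, define $\tau$ by $\tau(r)=u$ and $\tau(s)=v$ for $s\neq r$, define $\tau'$ by $\tau'(r')=u'$ and $\tau'(\hat r')=v'$ for $\hat r'\neq r'$, and take $f_T$ the point mass at $\tau$. Since $\tau$ and $\tau'$ are each constant away from the distinguished record, a direct count of the $(|\mathcal{R}|-1)!$ permutations in each class gives $P(r,r')\propto A_{u,u'}A_{v,v'}^{|\mathcal{R}|-1}$ and $P(r,\hat r')\propto A_{u,v'}A_{v,u'}A_{v,v'}^{|\mathcal{R}|-2}$, so the ratio equals $M$ exactly for every $\hat r'\neq r'$. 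The posterior is then precisely $1/(1+(|\mathcal{R}|-1)M)$, which exceeds $1/k$ for any $k>1+(|\mathcal{R}|-1)M$, contradicting $Pk$-anonymity at this $(\tau,\tau',r,r')$.

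The main obstacle will be keeping the bijection argument airtight while the sums over $\tau$ and $\pi$ are intertwined (the swap acts on $\pi$ alone, independently of $\tau$, but this must be spelled out carefully so that the term-wise inequality survives the double summation), together with the degenerate cases: if some $A_{u,u'}$ vanishes then $M=0$ and the bound collapses to the trivial $k\leq 1$, and for the necessity argument one should verify that the constructed $\tau'$ lies in the domain of the $Pk$-anonymization definition, i.e.\ $\Pr[\Delta(\tau)=\tau'\circ\Pi]\neq 0$, which is automatic whenever the relevant $A$-entries are strictly positive (as is the case for retention-replacement perturbation with $\rho<1$).
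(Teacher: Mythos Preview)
Your argument is correct and, in the sufficiency direction, cleaner than the paper's. The paper does not use your swap bijection; instead it first invokes a linear-fractional optimisation lemma (their Lemma~3, the fact that $\max_x (b\cdot x)/(a\cdot x)=\max_i b_i/a_i$) to reduce the worst-case adversary $f_T$ to a point mass at a single $\tau$, then performs a fairly involved algebraic rearrangement of the reciprocal $1/F(A^{\tau,\tau'})$ into the form $1+\text{(ratio of sums over $\pi$)}$, and finally applies a second lemma (their Lemma~4, a termwise-minimum argument) to pin down the minimising $\tau,\tau'$. Your permutation involution accomplishes all of this in one stroke: because the swap depends only on $\pi$ and not on $\tau$, the termwise inequality passes through both the sum over $\pi$ and the weighted sum over $\tau$ simultaneously, so you never need to isolate a single worst $\tau$ via Lemma~3 or a single worst $\pi$ via Lemma~4.

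For the necessity direction the two approaches converge: the paper identifies the same extremal configuration (one distinguished value at $r$, all other records equal; likewise for $\tau'$), and your explicit computation $P(r,r')\propto A_{u,u'}A_{v,v'}^{|\mathcal R|-1}$, $P(r,\hat r')\propto A_{u,v'}A_{v,u'}A_{v,v'}^{|\mathcal R|-2}$ is exactly the endpoint of their chain of reductions. The edge cases you flag (vanishing $A$-entries forcing $M=0$, and the need for $\Pr[\Delta(\tau)=\tau'\circ\Pi]>0$ so that the witness $\tau'$ is admissible) are not addressed explicitly in the paper either; your treatment of them is adequate, and for the intended application (retention--replacement perturbation with $\rho<1$) all entries are strictly positive anyway.
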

Note that this theorem shows the tight bound of $k$.
This theorem is shown by evaluating the maximum probability of estimation ${\cal E}(f_{T}, \tau', r, r')$ on  
$r\in{\cal R}, r'\in{\cal R}', \tau'\in{\cal T}'$ and background knowledge $f_{T}:{\cal T}\to\mathbb{R}$. 
The probability takes the maximum value in the following case. 
\begin{itemize}
\item All values in private table $\tau$ happened to be retained in released table $\tau'$
\item There are only two values in $\tau$ and $\tau'$, one is $\tau(r)$ and the other is $v\in{\cal V}$, 
which satisfies $\tau(s) = v$ for any record $s\neq r$ 
\item $\tau(r)$ and $v$ shown above are different from each other in all attributes
\item The adversary knows all about the private table, i.e.,
$f_{T}(\tau) = 
\begin{cases}
1 & \text{if } \tau = \tau' \\
0 & \text{otherwise}
\end{cases}$
\end{itemize}
With this fact, $k$ can be derived by substituting each parameter in estimate ${\cal E}(f_{T}, \tau', r, r')$.

\medskip
\noindent
({\it Proof of \refth{2}})\label{ap:th2}
\if0
\begin{atheorem}($Pk$-anonymity on RRP)\\
An RRP system is a $Pk$-anonymization when $k$ is described as follows.
\begin{equation*}
k = 1 + (|{\cal R}   | - 1) (\dprod   {a \in    {\cal A}   }{\frac{1 - \rho   _{a}}{1 + (|{\cal V}   _{a}| - 1)\rho   _{a}}})^{2}
\end{equation*}
\end{atheorem}
\fi

We show this theorem by evaluating the maximum of ${\cal E}   (f_{T}, \tau   ', r, r')$ on  
$r\in  {\cal R}   , r'\in  {\cal R}   ', \tau  '\in{\cal T}'$ and $f_{T}:{\cal T} \to    \mathbb{R}$. 
Similar to the proof of \reflm{ktoPk}, the following equation holds.\\

\dhead{{\cal E}   (f_{T}, \tau   ', r, r') = \pr[\Pi   (r) = r' | T' = \tau   ']}

\dcontx{= \frac{\pr[\Delta   (T) = \tau   ' \circ    \Pi    \land    \Pi   (r) = r']}{\pr[\Delta   (T) = \tau   ' \circ    \Pi   ] }}
{from \refeq{delta_pi}}

\dcont{= \frac{\dsum   {\tau   \in{\cal T}}{f_{T}(\tau   ) \pr[\Delta   (\tau   )= \tau   ' \circ    \Pi    \land    \Pi   (r) = r']}}{\dsum   {\tau\in{\cal T}}{f_{T}(\tau   ) \pr[\Delta   (\tau   )= \tau   ' \circ    \Pi   ]}}}

Next we show that $f_{T}$ maximizes the above estimation probability.
In other words, we show 
which adversary can guess the record of a person with the highest confidence.

\begin{lemma}\label{lm:3}
Let $\mathbb{R}^{n+}$ be the set of non-zero $n$-dim vectors whose elements are non-negative real numbers. 
Then for any vector $a, b \in   \mathbb{R}^{n+}$, 
the maximum of
$$g(x) \defeq \disp{\frac{b \cdot x}{a \cdot x} (= \frac{\sum   _{i < n}{b_{i} x_{i}}}{\sum   _{i < n}{a_{i} x_{i}}})}$$
on a variable $x$ on $\mathbb{R}^{n+}$
is 
$\dmax{i < n}{\frac{b_{i}}{a_{i}}}$, 
and $x$ satisfies
\begin{equation*}
\text{for any } i < n \text{ such that } \dfrac{b_{i}}{a_{i}} \neq    \dmax{i < n}{\dfrac{b_{i}}{a_{i}}}, x_{i} = 0.
\end{equation*}
\end{lemma}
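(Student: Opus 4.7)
My plan is to handle the lemma with an elementary two-step argument: first establish the upper bound $g(x) \leq \max_{i<n}(b_i/a_i)$ by a termwise majorisation, and then exhibit an $x$ that attains this bound (thereby simultaneously identifying the structural condition on maximisers). The lemma implicitly requires $a \cdot x \neq 0$ for $g$ to be defined and $a_i > 0$ for the ratios $b_i/a_i$ to be well defined; I would state these nondegeneracy conventions at the outset (they are consistent with the PRAM application, where transition-matrix entries are strictly positive).

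Let $M \defeq \dmax{i<n}{b_{i}/a_{i}}$. For the upper bound I would observe that for every index $i$ the inequality $b_{i} \leq M a_{i}$ holds by definition of the maximum. Multiplying by $x_{i} \geq 0$ and summing over $i$ yields $b \cdot x \leq M (a \cdot x)$, and since $a \cdot x > 0$ this is exactly $g(x) \leq M$. This is the routine step; no subtlety is hidden.

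For attainment, pick any index $j$ with $b_{j}/a_{j} = M$ and take $x$ to be the standard basis vector $e_{j}$ (which lies in $\mathbb{R}^{n+}$). Then $g(e_{j}) = b_{j}/a_{j} = M$, so the bound is tight. For the support characterisation, I would show the contrapositive: if $x_{i} > 0$ for some $i$ with $b_{i}/a_{i} \lneq M$, then the majorisation $b_{i} x_{i} \leq M a_{i} x_{i}$ is \emph{strict} for that index, while the remaining terms still satisfy the non-strict inequality; summing produces $b \cdot x \lneq M (a \cdot x)$, hence $g(x) \lneq M$. Consequently any maximiser $x$ must be supported only on the set $\{i : b_{i}/a_{i} = M\}$, as claimed.

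The argument is short and I do not anticipate a real obstacle; the only thing that requires care is the edge case where some $a_{i}$ vanishes. If the lemma is to be applied with $a_{i} = 0$ and $b_{i} > 0$ for some $i$, then $g$ can be made arbitrarily large by taking $x = e_{i}$, and the stated formula fails; so either the convention $a_{i} > 0$ for all $i$ must be imposed, or such indices must be treated as giving $b_{i}/a_{i} = +\infty$ and the maximum is attained only in the limit. In the intended application of \refth{2} the transition probability matrix has strictly positive entries, so this degeneracy does not arise.
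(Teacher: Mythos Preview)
Your argument is correct and complete. It takes a genuinely different route from the paper's own proof. The paper establishes the result via an optimisation-theoretic template: it first invokes scale invariance of $g$ to restrict to a compact slice of $\mathbb{R}^{n+}$ (guaranteeing a maximiser exists), then argues that at any maximiser each coordinate must satisfy either $x_i = 0$ or $\partial g/\partial x_i = 0$, computes the partial derivative explicitly, and reads off that the nonzero coordinates force $g(x) = b_i/a_i$, whence the maximum is $\max_i b_i/a_i$. Your approach bypasses all of this machinery with the direct termwise majorisation $b_i x_i \le M a_i x_i$, summed and divided; attainment and the support condition then fall out of the strict-inequality case. What you gain is brevity and the avoidance of any compactness or differentiability argument---your proof works verbatim over any ordered field and never touches calculus. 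What the paper's approach buys is a reusable pattern (reduce to a compact set, apply first-order conditions) that one might prefer when the objective is less transparently linear-fractional; for this particular lemma, however, your elementary route is the cleaner one. Your remark on the degeneracy when some $a_i = 0$ is also well taken and more careful than the paper, which silently assumes positivity.
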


\noindent{({\it proof of \reflm{3}})}\\
Since $g(x)$ is invariant on a scalar multiplication of $x$,
it is sufficient to find the maximum in some $Y \subset    \mathbb{R}^{n+}$ such that 
there exist $\alpha    \in    \mathbb{R}$ and $y \in    Y$ that satisfy $\alpha    y = x$, for $x \in    \mathbb{R}^{n+}$.
By taking $Y$ as a plane, 
we can find that the maximum exists because it is a bounded closed set.  

Next we have that
$$x_{i} = 0 \text{ or }\dfrac{\partial g(x)}{\partial x_{i}} = 0$$
holds
for each element $x_{i}$ of $x\in  \mathbb{R}^{n+}$ that gives maximum $g(x)$.
Otherwise, escalating $x_{i}$ should increase the value of $g(x)$, 
and contradicts that $g(x)$ is the maximum. 
Because of this fact and also because that $x$ is not a zero vector, 
there must exist at least one $i$ such that $\dfrac{\partial g(x)}{\partial x_{i}} = 0$. 

Finally, this partial differential is found to be
$$\dfrac{\partial g(x)}{\partial x_{i}} = \dfrac{(a \cdot x)b_{i} - (b \cdot x)a_{i}}{(a \cdot x)^{2}},$$
then
$$\dfrac{\partial g(x)}{\partial x_{i}} = 0 \Leftrightarrow    g(x) = \dfrac{b_{i}}{a_{i}}$$
holds. 
Therefore, $i$, which satisfies $\dfrac{\partial g(x)}{\partial x_{i}} = 0$ must be $i$ giving maximum $\dfrac{b_{i}}{a_{i}}$; 
all other elements are $0$, 
and the maximum of $g(x)$ is $\dmax{i < n}{\frac{b_{i}}{a_{i}}}$. 
\begin{flushright}
$\Box   $(\reflm{3})
\end{flushright}

From the above lemma, 
when ${\cal E}   (f_{T}, \tau   ', r, r')$ takes the maximum, 
$f_{T}$ makes the following formula maximum, 
\begin{equation}
\dfrac{\pr[\Delta   (\tau   )= \tau   ' \circ    \Pi    \land    \Pi   (r) = r')]}{\pr[\Delta   (\tau   )= \tau   ' \circ    \Pi   ]} \label{fm:pr1}
\end{equation}
and the maximum of \reffm{pr1} is equal to that of ${\cal E}   (f_{T}, \tau   ', r, r')$. 

Since $\Pi  $ is a uniformly random permutation, \reffm{pr1} is transformed as follows.

\dhead{\reffm{pr1} = \dfrac{\dfrac{1}{|{\cal R}|!}\dsum   {\pi   (r) = r'}{\pr[\Delta   (\tau   ) = \tau   ' \circ    \pi   ]}}{\dfrac{1}{|{\cal R}|!}\dsum   {\pi   }{\pr[\Delta   (\tau   ) = \tau   ' \circ    \pi   ]}}}

\dcontx{= \dfrac{\dsum   {\pi   (r) = r'}{\pr[\Delta   (\tau   ) = \tau   ' \circ    \pi   ]}}{\dsum   {\pi   }{\pr[\Delta   (\tau   ) = \tau   ' \circ    \pi   ]}}= \dfrac{\dsum   {\pi   (r) = r'}{\dprod   {s \in    {\cal R}   }{\pr[(\Delta   (\tau   ))(s) = \tau   '(\pi   (s))]}}}{\dsum   {\pi   }{\dprod   {s \in    {\cal R}   }{\pr[(\Delta   (\tau   ))(s) = \tau   '(\pi   (s))]}}}}
{since $\Delta  $ is independent from each record}

Let a matrix $A^{\tau, \tau'   }$ be 
$$A^{\tau, \tau'   }_{s, s'} \defeq \pr[(\Delta   (\tau   ))(s) = \tau   '(s')]$$
for any $s \in    {\cal R}   , s' \in    {\cal R}   '$.
Then, the above formula is represented as follows. 

\[
F(A^{\tau}) \defeq \dfrac{\dsum   {\pi   (r) = r'}{\dprod   {s \in    {\cal R}   }{A^{\tau, \tau'   }_{s, \pi   (s)}}}}{\dsum   {\pi   }\dprod   {s \in    {\cal R}   }{{A^{\tau, \tau'   }_{s, \pi   (s)}}}}
\]




We would rather find the minimum of the reciprocal
than the maximum of $F(A^{\tau, \tau'})$ itself. 
In the case of $|{\cal R}| \geq 2$,
the reciprocal is transformed as follows.
 
\dhead{\dfrac{1}{F(A^{\tau, \tau'})} = \dfrac{\dsum{\pi}\dprod{s \in {\cal R}}{{A^{\tau, \tau'}_{s, \pi(s)}}}}{\dsum{\pi(r) = r'}{\dprod{s \in  {\cal R}}{A^{\tau, \tau'}_{s, \pi(s)}}}}}

\dcont{= \dfrac{\dsum{t\neq r \atop t'\neq r'}A^{\tau, \tau'}_{t, r'}A^{\tau, \tau'}_{r, t'}\dsum{\pi(t)=r' \atop \pi  (r)=t'}\dprod{s\neq t,r}A^{\tau, \tau'}_{s, \pi(s)} + A^{\tau, \tau'}_{r, r'}\dsum{\pi(r) = r'}\dprod{s\neq r}A^{\tau, \tau'}_{s, \pi(s)}}{A^{\tau, \tau'}_{r, r'}\dsum{\pi(r) = r'}\dprod{s\neq r}A^{\tau, \tau'}_{s, \pi(s)}}}

\dcont{= 1 + \dfrac{\dsum{t\neq r \atop t'\neq r'}A^{\tau, \tau'}_{t, r'}A^{\tau, \tau'}_{r, t'}\dsum{\pi(t)=r' \atop \pi(r)=t'}\dprod{s\neq t,r}A^{\tau, \tau'}_{s, \pi(s)}}{A^{\tau, \tau'}_{r, r'}\dsum{\pi(r) = r'}\dprod{s\neq r}A^{\tau, \tau'}_{s, \pi(s)}}
= 1 + \dfrac{\dsum{t\neq r \atop t'\neq r'}A^{\tau, \tau'}_{t, r'}A^{\tau, \tau'}_{r, t'}\dsum{\pi(r)=r' \atop \pi(t)=t'}\dprod{s\neq t,r}A^{\tau, \tau'}_{s, \pi(s)}}{A^{\tau, \tau'}_{r, r'}\dsum{\pi(r) = r'}\dprod{s\neq r}A^{\tau, \tau'}_{s, \pi(s)}}
}

\dcont{= 1 + \dfrac{\dsum{t\neq r}A^{\tau, \tau'}_{t, r'}A^{\tau, \tau'}_{r, \pi(t)}\dsum{\pi(r)=r'}\dfrac{\dprod{s\neq r}A^{\tau, \tau'}_{s, \pi(s)}}{A^{\tau, \tau'}_{t, \pi(t)}}}{A^{\tau, \tau'}_{r, r'}\dsum{\pi(r) = r'}\dprod{s\neq r}A^{\tau, \tau'}_{s, \pi(s)}}
= 1 + \dfrac{\dsum{\pi(r)=r'}\dsum{t\neq r}\dfrac{A^{\tau, \tau'}_{t, r'}A^{\tau, \tau'}_{r, \pi(t)}}{A^{\tau, \tau'}_{t, \pi(t)}}\dprod{s\neq r}A^{\tau, \tau'}_{s, \pi(s)}}{\dsum{\pi(r) = r'}A^{\tau, \tau'}_{r, r'}\dprod{s\neq r}A^{\tau, \tau'}_{s, \pi(s)}}
}

We show the following lemma.
\begin{lemma}\label{lm:4}
Let $g_{i}$ and $h_{i}$ be $g_{i}, h_{i} : \mathbb{R}^{{\cal I}} \to \mathbb{R}$ for any index $i\in{\cal I}$, where ${\cal I}$ is a set of indices.
If some $x\in\mathbb{R}^{{\cal I}}$ and $z\in\mathbb{R}$ satisfy $\dfrac{h_{i}(x)}{g_{i}(x)} = \dmin{x'\in\mathbb{R}^{{\cal I}}}{\dfrac{h_{i}(x')}{g_{i}(x')}} = z$ for any $i\in{\cal I}$, 
then the following equation is satisfied.
$$\dmin{x'\in \mathbb{R}^{{\cal I}}}{\dfrac{\dsum{i\in{\cal I}}{h_{i}(x')}}{\dsum{i\in{\cal I}}{g_{i}(x')}}} = \dfrac{\dsum{i\in{\cal I}}{h_{i}(x)}}{\dsum{i\in{\cal I}}{g_{i}(x)}}$$
\end{lemma}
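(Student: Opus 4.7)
The plan is to verify the two directions: (i) the value $\frac{\sum_i h_i(x)}{\sum_i g_i(x)}$ actually equals $z$, and (ii) no other $x'$ can make the ratio of sums smaller than $z$. The hypothesis should be understood in the natural regime $g_i(x')>0$ for all $i$ and $x'$ (which is the context we care about in the proof of Theorem~\ref{th:2}, where the $g_i$ are products of transition probabilities and therefore strictly positive).

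First I would evaluate the ratio at the given $x$. Since $\frac{h_i(x)}{g_i(x)}=z$ for every $i\in\mathcal{I}$, we have $h_i(x)=z\,g_i(x)$. Summing over $i$ gives $\sum_i h_i(x)=z\sum_i g_i(x)$, so
\[
\frac{\sum_{i\in\mathcal{I}} h_i(x)}{\sum_{i\in\mathcal{I}} g_i(x)} \;=\; z.
\]
Thus $z$ is attained by the ratio of sums at $x$.

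Next I would show $z$ is a lower bound. For any $x'\in\mathbb{R}^{\mathcal{I}}$ and any $i\in\mathcal{I}$, the assumed global minimality of $z$ yields $\frac{h_i(x')}{g_i(x')}\ge z$, and multiplying through by the positive quantity $g_i(x')$ gives $h_i(x')\ge z\,g_i(x')$. Summing these inequalities over $i$ and dividing by $\sum_i g_i(x')>0$ produces
\[
\frac{\sum_{i\in\mathcal{I}} h_i(x')}{\sum_{i\in\mathcal{I}} g_i(x')} \;\ge\; z.
\]
Combined with the equality at $x$, this establishes the claimed minimum.

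The only subtle point — and essentially the main obstacle — is justifying that the signs behave as expected, i.e.\ that multiplying the pointwise inequality by $g_i(x')$ preserves the direction. This is automatic from positivity of the $g_i$, which I would state explicitly as a standing assumption inherited from the PRAM setting (where $g_i(x')=\prod_{s\neq r}A^{\tau,\tau'}_{s,\pi(s)}>0$). After that, the lemma reduces to the elementary observation that a weighted mean of quantities all bounded below by $z$ is itself bounded below by $z$, with equality exactly when each quantity equals $z$.
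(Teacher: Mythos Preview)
Your proof is correct and follows essentially the same approach as the paper: derive $h_i(x')\ge z\,g_i(x')$ from the minimality hypothesis, sum, and divide, with the value $z$ attained at $x$. If anything, you are more explicit than the paper about the attainment step and about the implicit positivity assumption on the $g_i$.
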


\noindent({\it proof of \reflm{4}})\\
From the assumption of the lemma, $h_{i}(x') \geq z g_{i}(x')$ hold for all $i\in{\cal I}$ and any $x'\in\mathbb{R}$.
Therefore, 
$$\dfrac{\dsum{i\in{\cal I}}{h_{i}(x')}}{\dsum{i\in{\cal I}}{g_{i}(x')}} \geq z,$$ 
then 
$$\dmin{x'\in \mathbb{R}^{n}}{\dfrac{\dsum{i\in{\cal I}}{h_{i}(x')}}{\dsum{i\in{\cal I}}{g_{i}(x')}}} = \dfrac{\dsum{i\in{\cal I}}{h_{i}(x)}}{\dsum{i\in{\cal I}}{g_{i}(x)}}$$
holds.
\begin{flushright}
$\Box   $(\reflm{4})
\end{flushright}

Let $h_{\pi}(A^{\tau, \tau'})$ and $g_{\pi}(A^{\tau, \tau'})$ be 
$$h_{\pi}(A^{\tau, \tau'}) = \dsum{t\neq r}\dfrac{A^{\tau, \tau'}_{t, r'}A^{\tau, \tau'}_{r, \pi(t)}}{A^{\tau, \tau'}_{t, \pi(t)}}\dprod{s\neq r}A^{\tau, \tau'}_{s, \pi(s)},$$
$$g_{\pi}(A^{\tau, \tau'}) = A^{\tau, \tau'}_{r, r'}\dprod{s\neq r}A^{\tau, \tau'}_{s, \pi(s)}$$
for any $\pi:{\cal R}\to{\cal R}'$.
Thanks to \reflm{4}, it is sufficient to consider $\dfrac{h_{\pi}(A^{\tau, \tau'})}{g_{\pi}(A^{\tau, \tau'})}$
only. 
Because it is transformed into
$\dfrac{1}{A^{\tau,\tau'}_{r, r'}}\dsum{t\neq r}\dfrac{A^{\tau, \tau'}_{t, r'}A^{\tau,\tau'}_{r, \pi(t)}}{A^{\tau,\tau'}_{t, \pi(t)}}$, 
it takes the minimum for any $\pi:{\cal R}\to{\cal R}'$ 
when $\tau$ and $\tau'$ are as follows. 
\begin{quote}
There exists $v\in{\cal V}$ and $v'\in{\cal V}'$ such that $\dfrac{A_{v, \tau'(r')}A_{\tau(r), v'}}{A_{\tau(r), \tau'(r')}A_{v, v'}} =\dmin{u, v\in{\cal V} \atop u',v' \in{\cal V}'}\dfrac{A_{u, v'}A_{v, u'}}{A_{u, u'}A_{v, v'}}$, $\tau(s) = v$ for any $s\neq r$ and $\tau'(s') = v'$ for any $s'\neq r'$.
\end{quote}

Since $k$ is to be the reciprocal of the maximum of $F(A^{\tau, \tau'})$,  
$k$ is found to be the following value.
\begin{eqnarray*}
k &=& 1 + (|{\cal R}| - 1)\dmin{u, v\in{\cal V} \atop u',v' \in{\cal V}'}\dfrac{A_{u, v'}A_{v, u'}}{A_{u, u'}A_{v, v'}}
\end{eqnarray*}

It is easy to confirm that the above equation also holds when $|{\cal R}| = 1$.
In this case, since only one $\pi$ exists (denoted as $\hat{\pi}$), $k$ equals $1$ as follows.

\[
k = \dfrac{1}{F(A^{\tau, \tau'})} = \dfrac{\dsum{\pi}\dprod{s \in {\cal R}}{{A^{\tau, \tau'}_{s, \pi(s)}}}}{\dsum{\pi(r) = r'}{\dprod{s \in  {\cal R}}{A^{\tau, \tau'}_{s, \pi(s)}}}}
= \dfrac{\dprod{s \in {\cal R}}{{A^{\tau, \tau'}_{s, \hat{\pi}(s)}}}}{{\dprod{s \in  {\cal R}}{A^{\tau, \tau'}_{s, \hat{\pi}(s)}}}}
= 1
 = 1 + (|{\cal R}| - 1)\dmin{u, v\in{\cal V} \atop u',v' \in{\cal V}'}\dfrac{A_{u, v'}A_{v, u'}}{A_{u, u'}A_{v, v'}}
\]
\hfill {(since $|{\cal R}| = 1$)}
\begin{flushright}
$\Box   $
\end{flushright}

We describe the multi-attribute version of \refth{2}. 
\begin{corollary}
A PRAM mechanism whose transition probability matrices are $A_{a}$ for each attribute $a$ is a $Pk$-anonymization when $k$ is described as follows, 
\begin{equation*}
k = 1 + (|{\cal R}| - 1) \dprod{a \in {\cal A}}{\rm AR}_{a}
\end{equation*}
where ${\rm AR}_{a}$ is
$${\rm AR}_{a} = \dmin{u, v\in{\cal V} \atop u', v' \in{\cal V}'}\dfrac{(A_{a})_{u, v'}(A_{a})_{v, u'}}{(A_{a})_{u, u'}(A_{a})_{v, v'}}.$$
\end{corollary}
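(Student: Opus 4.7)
The plan is to reduce this multi-attribute corollary to the single-attribute Theorem~\ref{th:2} by exploiting the per-attribute independence of PRAM. Because PRAM perturbs each attribute independently, for any $u \in {\cal V}$ and $v' \in {\cal V}'$ the joint transition probability factorises as $A_{u, v'} = \dprod{a \in {\cal A}}{(A_{a})_{u_{a}, v'_{a}}}$, where $u_{a}$ denotes the $a$-component of $u$. This product structure is the only ingredient the corollary contributes beyond Theorem~\ref{th:2}.

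First I would substitute this factorisation into the bound from Theorem~\ref{th:2}. For any $u, v \in {\cal V}$ and $u', v' \in {\cal V}'$, the ratio inside the minimum becomes
\begin{equation*}
\dfrac{A_{u, v'} A_{v, u'}}{A_{u, u'} A_{v, v'}} = \dprod{a \in {\cal A}}{\dfrac{(A_{a})_{u_{a}, v'_{a}}(A_{a})_{v_{a}, u'_{a}}}{(A_{a})_{u_{a}, u'_{a}}(A_{a})_{v_{a}, v'_{a}}}}.
\end{equation*}
Each factor on the right is non-negative and depends only on the $a$-components of $u, v, u', v'$, so the quantity to be minimised in the single-attribute bound splits into a product of per-attribute ratios.

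Next I would decompose the minimisation. Since ${\cal V} = \dprod{a \in {\cal A}}{{\cal V}_{a}}$ and ${\cal V}' = \dprod{a \in {\cal A}}{{\cal V}'_{a}}$, choosing $(u, v, u', v') \in {\cal V}^{2} \times {\cal V}'^{2}$ is equivalent to choosing $(u_{a}, v_{a}, u'_{a}, v'_{a})$ independently for each $a \in {\cal A}$. For non-negative functions $g_{a}$ acting on disjoint coordinates one has $\min \dprod{a}{g_{a}} = \dprod{a}{\min g_{a}}$: the lower bound is immediate, and it is attained by picking the per-attribute minimisers jointly. Applying this observation to the product of ratios displayed above gives
\begin{equation*}
\dmin{u, v \in {\cal V} \atop u', v' \in {\cal V}'}\dfrac{A_{u, v'} A_{v, u'}}{A_{u, u'} A_{v, v'}} = \dprod{a \in {\cal A}}{{\rm AR}_{a}},
\end{equation*}
and combining this with Theorem~\ref{th:2} yields the stated expression for $k$. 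The only step that requires care is the decoupling of the minimum across attributes, which rests on the product structure of the value spaces and the non-negativity of each per-attribute factor (guaranteed whenever the transition matrices have strictly positive entries, as for nontrivial PRAM); given Theorem~\ref{th:2}, no step poses a serious obstacle.
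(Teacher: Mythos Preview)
Your argument is correct and is precisely the intended derivation: the paper states this corollary without an explicit proof, treating it as the immediate multi-attribute version of Theorem~\ref{th:2}, and your factorisation $A_{u,v'}=\dprod{a\in{\cal A}}(A_a)_{u_a,v'_a}$ together with the decoupling $\min\dprod{a}g_a=\dprod{a}\min g_a$ over the product space ${\cal V}=\dprod{a}{\cal V}_a$ is exactly what that reduction requires. The only minor remark is notational: the paper's display for ${\rm AR}_a$ writes $u,v\in{\cal V}$ rather than ${\cal V}_a$, but your reading (minimising over per-attribute components) is clearly the intended one.
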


The following corollary is applicable to retention-replacement perturbation. 
\begin{corollary}
Retention-replacement perturbation  whose retention probabilities are $\rho_{a}$ for each attribute $a\in{\cal A}$, 
is a $Pk$-anonymization when $k$ is described as follows, 
\begin{equation*}
k = 1 + (|{\cal R}| - 1) \dprod{a \in {\cal A}}{\rm AR}_{a}
\end{equation*}
where ${\rm AR}_{a}$ is 
$${\rm AR_{a}} = \left({\dfrac{1 - \rho_{a}}{1 + (|{\cal V}_{a}| - 1)\rho_{a}}}\right)^{2}.$$
\end{corollary}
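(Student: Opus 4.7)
\medskip
\noindent
\emph{(Proof proposal for the final corollary.)}
The plan is simply to specialize the preceding corollary to the retention-replacement case: since retention-replacement perturbation is an instance of PRAM with the transition matrix $A_{a}$ given explicitly in \refsec{rrp}, it suffices to evaluate
$$\mathrm{AR}_{a} = \dmin{u, v \in {\cal V}_{a} \atop u', v' \in {\cal V}'_{a}}
\dfrac{(A_{a})_{u, v'}(A_{a})_{v, u'}}{(A_{a})_{u, u'}(A_{a})_{v, v'}}$$
for the retention-replacement matrix and check that the result coincides with $\bigl((1-\rho_{a})/(1 + (|{\cal V}_{a}|-1)\rho_{a})\bigr)^{2}$.

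The first step is to observe that the retention-replacement matrix takes only two distinct values: write $p_{a} := \rho_{a} + (1-\rho_{a})/|{\cal V}_{a}|$ for the diagonal entries and $q_{a} := (1-\rho_{a})/|{\cal V}_{a}|$ for the off-diagonal ones. Since $\rho_{a} \in [0,1]$ we have $p_{a} \geq q_{a}$. The ratio to be minimized is therefore a product of two factors of the form $p_{a}$ or $q_{a}$ divided by a similar product; to make the fraction as small as possible we take the numerator to be $q_{a}^{2}$ (choosing $u \neq v'$ and $v \neq u'$) and the denominator to be $p_{a}^{2}$ (choosing $u = u'$ and $v = v'$). These constraints are simultaneously realisable as soon as $|{\cal V}_{a}| \geq 2$ (pick any two distinct values and set $u = u'$ equal to the first and $v = v'$ equal to the second); if $|{\cal V}_{a}| = 1$ then $p_{a} = q_{a} = 1$ and the ratio is trivially~$1$, which agrees with the stated formula.

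The second step is routine simplification. Rewriting $p_{a} = (1 + (|{\cal V}_{a}|-1)\rho_{a})/|{\cal V}_{a}|$ and $q_{a} = (1-\rho_{a})/|{\cal V}_{a}|$, the factors of $|{\cal V}_{a}|$ cancel in $q_{a}/p_{a}$, yielding $\mathrm{AR}_{a} = (q_{a}/p_{a})^{2} = \bigl((1-\rho_{a})/(1 + (|{\cal V}_{a}|-1)\rho_{a})\bigr)^{2}$. Plugging this into the multi-attribute corollary immediately gives the claimed expression for $k$.

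There is essentially no hard step: the previous corollary does all the heavy lifting, and the only thing to verify is that the minimiser of the four-index ratio places both numerator factors in off-diagonal positions and both denominator factors in diagonal positions. The only subtle point worth mentioning is the feasibility check that such $(u,v,u',v')$ actually exist in ${\cal V}_{a} \times {\cal V}_{a} \times {\cal V}'_{a} \times {\cal V}'_{a}$ (with ${\cal V}_{a} = {\cal V}'_{a}$ as assumed at the beginning of \refsec{apply}), and the degenerate case $|{\cal V}_{a}|=1$, both of which are dispatched in one line each.
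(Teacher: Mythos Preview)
Your proposal is correct and is exactly the computation the paper leaves implicit: the corollary is stated without proof as an immediate specialization of the multi-attribute PRAM corollary, and your argument---reducing the four-index minimum to $(q_{a}/p_{a})^{2}$ via the two-valued structure of the retention-replacement matrix and then simplifying---is the intended one-line verification.
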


Using \refth{2}, $k$ is easily calculated with the record count $|{\cal R}|$ and transition probability matrix $A$. 
Regarding retention-replacement perturbation, 
$A$ is determined independently with the instance of private data, 
$k$ is calculated with the record count $|{\cal R}|$ and the numbers of attribute values $|{\cal V}_{a}|$ and retention probabilities $\rho_{a}$ only, for each attribute $a$. 

Conversely, $\rho_{a}$ are also calculated in retention-replacement perturbation. 
By letting all $\rho_{a}$ be the same $\rho$ over all attributes, the equation is transformed as follows. 
\begin{equation}\label{eq:rrp}
k = 1 + (|{\cal R}| - 1) \left(\dprod{a \in {\cal A}}{\frac{1 - \rho}{1 + (|{\cal V}_{a}| - 1)\rho}})\right)^{2}
\end{equation}
Since $k$ monotonically decreases on $0 \leq\rho\leq 1$, 
$\rho$ is easily and uniquely solved using, for example, the bisection method for any $k$, $|{\cal R}|$ and $|{\cal V}_{a}|$(\refalg{rho_from_k}).
Note that $k$ is allowed to be a real number, 
for example, $k = 1.5$. 

\begin{algorithm}
\caption{determining $\rho$ in retention-replacement perturbation from $k$\newline
input: $k\in\mathbb{R}(k \geq 1)$, $|{\cal R}|\in\mathbb{N}$, $|{\cal V}_{a}|$ for each attribute\newline
output: retention probability $\rho$}
\label{alg:rho_from_k}
\begin{algorithmic}[1]
\st Set $\rho_{0} = 1/2$. 
\st Run the bisection method with $\rho$'s initial value $\rho_{0}$ with respect to $k$ using \refeq{rrp} and output the converged $\rho$. 
\end{algorithmic}
\end{algorithm}

For example, 
to ensuring $P100$-anonymity on $100,000$ records of data, 
$\rho  $ is calculated as roughly $0.303$,
 where there are three attributes, 
sex, age from $20$'s to $60$'s, and $10$-leveled annual income. 

When the record count is uncertain since the data are to be collected thereafter,
it is sufficient to use the expected record count.
Even when the record count does not reach the expected value, 
$Pk$-anonymity is still satisfied for the following reason.
When each record in table $\tau'$ is anonymous due to an anonymous communication channel,
it can be said that only a part of table $\tau'$ is visible in the state in which $\tau'$ is being collected.
An estimation in such a situation is equivalent to that from the algorithm that ignores the absent records.
From Proposition~1, the algorithm cannot derive ${\cal E}  (f_{T}, \tau  ', r, r') \gneq  1/k$ if it is correct.

\subsection{DP on PRAM in Addition to Pk-Anonymity}

Regarding retention-replacement perturbation, 
we can derive \refalg{rho_from_e} that 
determines the parameter in order to satisfy $\varepsilon$-DP
from \refcor{dp3}.

\begin{algorithm}
\caption{determining $\rho$ from $\varepsilon$\newline
input: $\varepsilon>0$ and $|{\cal V}_{a}|$ for each attribute $a$\newline
output: retention probability $\rho$}
\label{alg:rho_from_e}
\begin{algorithmic}[1]
\st Set $\rho_{0} = 1/2$. 
\st Run the bisection method with $\rho$'s initial value $\rho_{0}$ with respect to $\varepsilon$ using \refeq{rrp_dp}, and output the converged $\rho$. 
\end{algorithmic}
\end{algorithm}

Combining \refalg{rho_from_e} with \refalg{rho_from_k},
we have \refalg{rho_from_ke} that 
determines the parameter in order to satisfy both
$Pk$-anonymity and $\varepsilon$-DP.

\begin{algorithm}
\caption{determining $\rho$ from $k$ and $\varepsilon$\newline
input: $k\in\mathbb{R}(k \geq 1)$, $\varepsilon>0$, $|{\cal R}|\in\mathbb{N}$, $|{\cal V}_{a}|$ for each attribute\newline
output: retention probability $\rho$}
\label{alg:rho_from_ke}
\begin{algorithmic}[1]
\st Run \refalg{rho_from_k} and \refalg{rho_from_e} and let the results be $\rho_{k}$ and $\rho_{\varepsilon}$, respectively. 
\st output $\min(\rho_{k}, \rho_{\varepsilon})$. 
\end{algorithmic}
\end{algorithm}

\section{Experimental results}\label{sec:ex}

From the aspect of utility,
we show that randomized data-bases protected by $Pk$-anonymity are available for data analyses.
We experimented with cross-tabulations (or, contingency tables) using $Pk$-anonymity. 

In the experiments discussed below, 
the dataset was randomized  by retention-replacement perturbation, 
and cross tabulations were calculated using the reconstruction method \cite{AST05}.
The target dataset was the US census dataset in the UCI Machine Learning Repository \cite{BM98}, which has $2,458,285$ records.
Out of this dataset, we extracted and used $7$ attributes, as shown on \reftbl{atts}.
Several attributes were rounded because they had too many attribute values for cross tabulation. 

\begin{table}[h]
 \begin{center}{
\caption{attributes and number of attribute values}
\label{tbl:atts}
\begin{tabular}{|c|c|}\hline
Sex     & 2 \\ \hline
Age[*]      &   18 \\ \hline
Total Pers. Inc. Signed[*] & 12 \\ \hline
Worked Last Yr. 1989 & 3 \\ \hline
Worked Last Week & 3 \\ \hline
Ed. Attainment & 18 \\ \hline
Travel Time to Work[*] & 20 \\ \hline
\end{tabular}\\
(Marked([*]) attributes are rounded.)
 }\end{center}
\end{table}

\reffig{recs} shows $L1$-norm errors and $\epsilon$ by varying the record count with fixed $k = 2$ and four attributes, 
Sex, Age, Total Pers. Inc. Signed, and Worked Last Yr. 1989.
The $L1$-norm is a normalized distance between original cross-tabulated aggregates and reconstructed aggregates, 
given as the following $d$, where each $x_{v}$ and $y_{v}$ are the counted aggregates of the private table and the reconstructed aggregates corresponding to $v\in{\cal V}$, respectively.
$$d = \dfrac{\dsum{v \in {\cal V}}|x_{v} - y_{v}|}{|{\cal R}|}$$
From the graph, it seems that errors become smaller as the record count increases.
When only $245$ records were used, errors were quite high.
However, there were almost no errors when all $2,458,285$ records were used. 
This is due to two reasons.
First, in the reconstruction method, a large number of records generally results in accurate analyses results in a fixed retention probability. 
Second, since many records also provide high $k$ on $Pk$-anonymity by the same $\rho$ according to \refth{2}, 
one can set a relatively high $\rho$. 
Regarding $\varepsilon$-DP, $\varepsilon$ increases as the record count increases. 
It is because retention probability $\rho$ monotonically increases with the increase of the record count by \refeq{rrp}, when $k$ is fixed; 
\begin{figure}[htbp]
\centering
\includegraphics[width = 9cm]{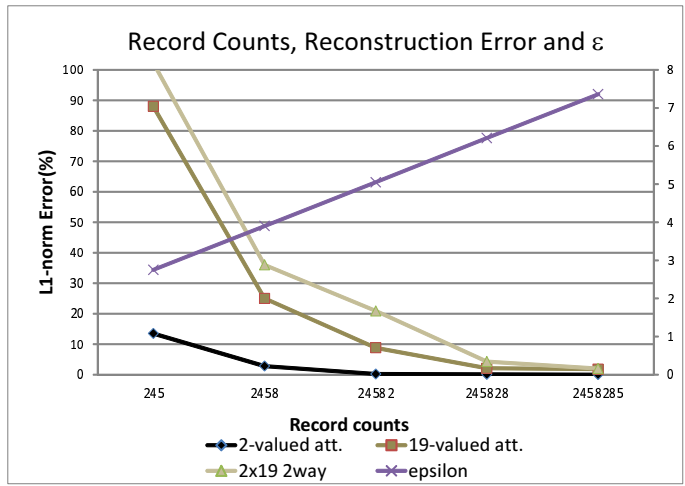}	
\caption{Reconstruction errors and $\epsilon$ by varying number of records} \label{fig:recs}
\end{figure}

\reffig{atts} shows $L1$-norm errors and $\epsilon$ by varying the number of attributes with fixed $k = 2$, 
using all the records of the dataset. 
Attributes have been added in the same order as in \reftbl{atts}.
\reffig{k} shows $L1$-norm errors and $\epsilon$ by varying $k$ with fixed attributes. 
All the records from the dataset were used and attributes were the same four attributes as in \reffig{recs}.
From these graphs, it seems that errors become larger as the number of attributes or $k$ increases.
However, the increment is quite small.

\begin{figure}[htbp]
\centering
\includegraphics[width = 9cm]{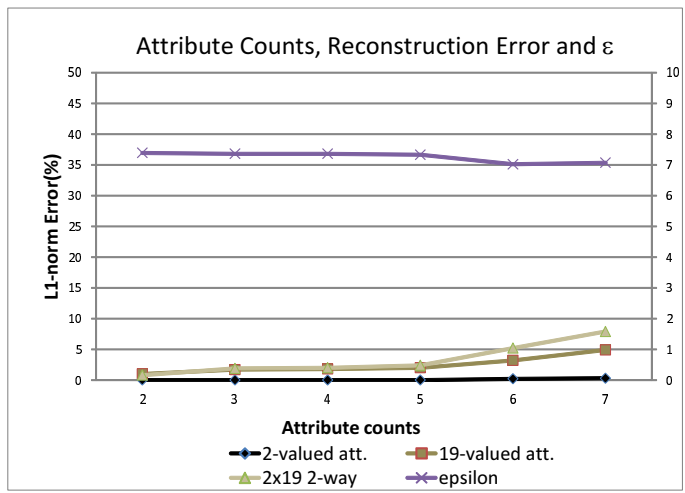}	
\caption{Reconstruction errors and $\epsilon$ by varying number of attributes} \label{fig:atts}p
\end{figure}

\begin{figure}[htbp]
\centering
\includegraphics[width = 9cm]{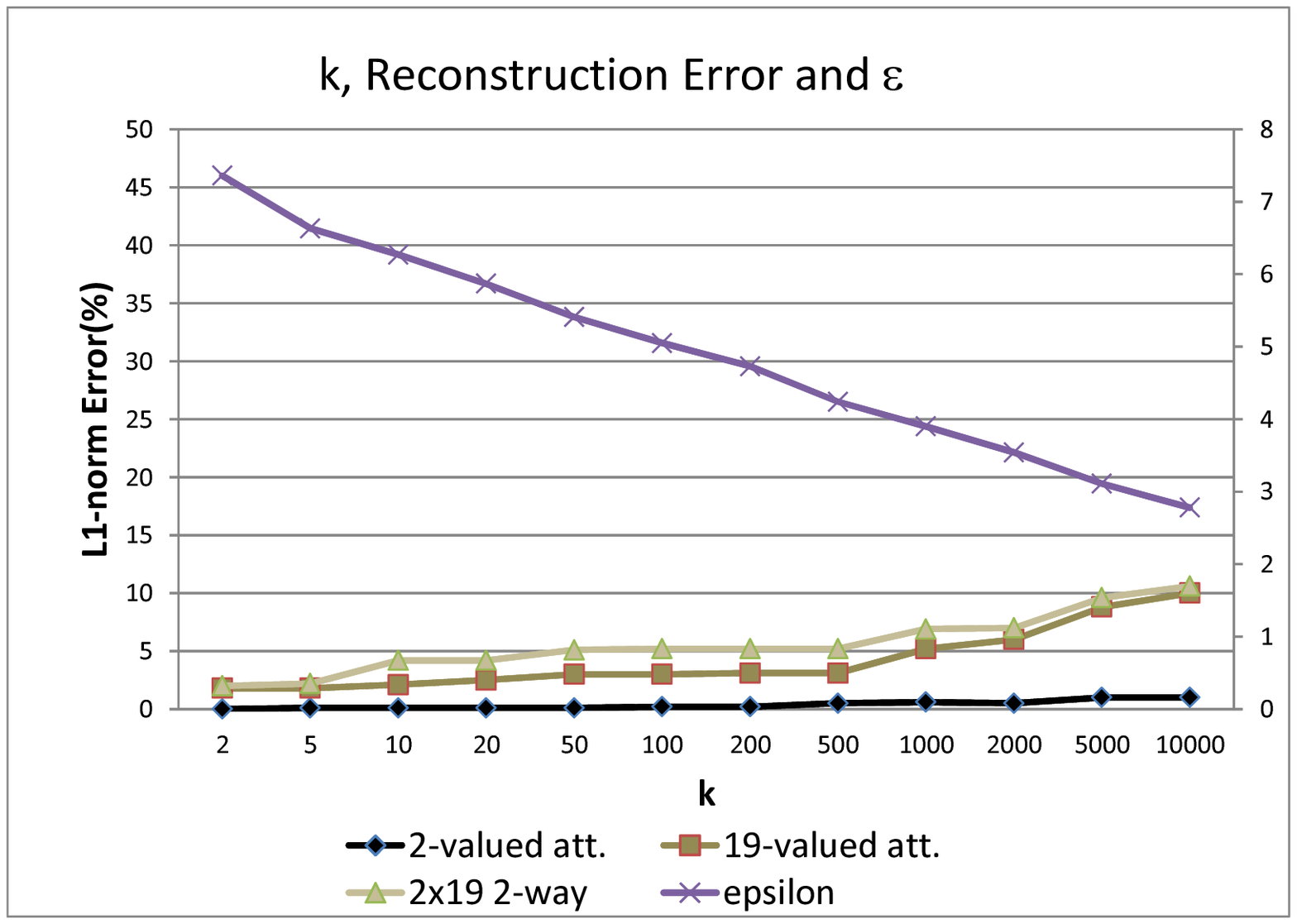}	
\caption{Reconstruction errors and $\epsilon$ by varying $k$} \label{fig:k}
\end{figure}




\section{Conclusions}\label{sec:conclusions}
In the field of anonymized microdata release, 
we mainly presented the following two theories.  
We first proposed an anonymity notion, $Pk$-anonymity, 
which is an extension of $k$-anonymity to randomized microdata, 
and its intuitive meaning is 
\dquote{no one estimates which person the record came from with more than $1/k$ probability.}
We then applied $Pk$-anonymity to PRAM. 
PRAM is known to satisfy $\varepsilon$-DP; 
thus, 
it achieves $k$-anonymous and $\varepsilon$-differentially private microdata release. 

The contributions of the paper are:
\begin{itemize}
\item an anonymity notion called $Pk$-anonymity,
\item proofs that $Pk$-anonymity is an exact mathematical extension of $k$-anonymity,
\item a formula for calculating $k$ on PRAM,
\item algorithms for determining the parameter of the retention-replacement perturbation according to $k$ and $\varepsilon$,
\item experimental results to empirically analyze the trade-off relation between utility and privacy/anonymity using a real dataset. 
\end{itemize}

Theoretical analyses and further experiments in real applications regarding utility are future work.




%
\bibliographystyle{abbrv}
\bibliography{smc,reconst,k_anonymity,dp,db}  
%
%

\appendix 

\section{Proof of Theorem 1}\label{app:dp}
\label{ap:thdp1}

Let $\tau_{1}, \tau_{2}\in{\cal T}$ be arbitrary private tables differing by one record
(i.e., one and only one $r \in {\cal R}$ exists and satisfies [$\tau_{1}(r) \neq \tau_{2}(r)$]
and $\tau_{1}|_{{\cal R}\setminus\{ r\}} = \tau_{2}|_{{\cal R}\setminus\{ r\}}$)
and let $\tau' \in {\cal T}'$ be an arbitrary randomized table. 

The proposition we should show is the following inequality.  
$$\dmax{\tau_{1}, \tau_{2}, \tau'}\dfrac{\pr[\Delta(\tau_{1}) = \tau']}{\pr[\Delta(\tau_{2}) = \tau']} 
\leq \exp(\varepsilon) =\dmax{u, v \in {\cal V} \atop v' \in {\cal V}'}\dfrac{A_{u, v'}}{A_{v, v'}}$$

The left-hand side of the above inequality is transformed as follows. 

\begin{eqnarray}
&&\dmax{\tau_{1}, \tau_{2}, \tau'}\dfrac{\pr[\Delta(\tau_{1}) = \tau']}{\pr[\Delta(\tau_{2}) = \tau']} \nonumber\\
&=& \dmax{\tau_{1}, \tau_{2}, \tau'}\dfrac{\dsum{\pi:{\cal R}\to{\cal R}}\dfrac{1}{|{\cal R}|}\dprod{s\in{\cal R}}\pr[(\Delta(\tau_{1}))(s) = \tau'(\pi(s))]}
      {\dsum{\pi:{\cal R}\to{\cal R}}\dfrac{1}{|{\cal R}|}\dprod{s\in{\cal R}}\pr[(\Delta(\tau_{2}))(s) = \tau'(\pi(s))]} \nonumber\\
&=& \dmax{\tau_{1}, \tau_{2}, \tau'}\dfrac{\dsum{\pi:{\cal R}\to{\cal R}}\dprod{s\in{\cal R}}A_{\tau_{1}(s), \tau'(\pi(s))}}
      {\dsum{\pi:{\cal R}\to{\cal R}}\dprod{s\in{\cal R}}A_{\tau_{2}(s), \tau'(\pi(s))}} \nonumber\\
&=& \dmax{\tau_{1}, \tau_{2}, \tau'}\dfrac{\dsum{\pi:{\cal R}\to{\cal R}}(\dprod{s\neq r}A_{\tau_{1}(s), \tau'(\pi(s))})A_{\tau_{1}(r), \tau'(\pi(r))}}
      {\dsum{\pi:{\cal R}\to{\cal R}}(\dprod{s\neq r}A_{\tau_{2}(s), \tau'(\pi(s))})A_{\tau_{2}(r), \tau'(\pi(r))}} \label{fm:dp1}
\end{eqnarray}

Now, 
let $v_{1}$, $v_{2}$ and $\bar{\tau}$ be $\tau_{1}(r)$, $\tau_{2}(r)$ and 
$\tau_{1}|_{{\cal R}\setminus\{ r\}} (= \tau_{2}|_{{\cal R}\setminus\{ r\}})$, 
respectively. 
Note that these three variables determine $\tau_{1}$ and $\tau_{2}$ uniquely. 
Furthermore, 
let $a_{\pi, v_{2}, \tau'}$, $b_{\pi, v_{1}, \tau'}$ and $x_{\pi, \bar{\tau}, \tau'}$ denote 
$A_{\tau_{2}(r), \tau'(\pi(r))}$, $A_{\tau_{1}(r), \tau'(\pi(r))}$ and \\
$\dprod{s\neq r}A_{\tau_{1}(s), \tau'(\pi(s))} (=\dprod{s\neq r}A_{\tau_{2}(s), \tau'(\pi(s))})$, 
respectively. 
Using these representations, 
we denote the above formula with the following formula. 

\begin{eqnarray}
(\ref{fm:dp1})
&=&\dmax{\tau_{1}, \tau_{2}, \tau'}\dfrac{\dsum{\pi:{\cal R}\to{\cal R}}b_{\pi, v_{1}, \tau'}x_{\pi, \bar{\tau}, \tau'}}
      {\dsum{\pi:{\cal R}\to{\cal R}}a_{\pi, v_{2}, \tau'}x_{\pi, \bar{\tau}, \tau'}} \nonumber\\
&=& \dmax{\bar{\tau}, v_{1}, v_{2}, \tau'}\dfrac{\dsum{\pi:{\cal R}\to{\cal R}}b_{\pi, v_{1}, \tau'}x_{\pi, \bar{\tau}, \tau'}}
      {\dsum{\pi:{\cal R}\to{\cal R}}a_{\pi, v_{2}, \tau'}x_{\pi, \bar{\tau}, \tau'}} \label{fm:dp2}
\end{eqnarray}

By fixing $v_{1}$, $v_{2}$ and $\tau'$, 
we can apply \reflm{3} to remove $\bar{\tau}$ and $x_{\pi, \bar{\tau}, \tau'}$ from the above maximum.  

\begin{eqnarray}
(\ref{fm:dp2})
&=& \dmax{v_{1}, v_{2}, \tau', \pi}\dfrac{b_{\pi, v_{1}, \tau'}}
      {a_{\pi, v_{2}, \tau'}} \nonumber\\
&\leq& \dmax{v_{1}, v_{2}, \tau', \pi}\dfrac{A_{\tau_{1}(r), \tau'(\pi(r))}}
      {A_{\tau_{2}(r), \tau'(\pi(r))}}
    \leq \dmax{u, v \in {\cal V} \atop v' \in {\cal V}'}\dfrac{A_{u, v'}}{A_{v, v'}}
\end{eqnarray}
\begin{flushright}
$\Box   $
\end{flushright}

\if0
\subsection{Unknown Case: Private Record Count}
As mentioned above,
it has been known that PRAM can satisfy $\varepsilon$-DP when the record count is public.
We analyzed whether PRAM can satisfy $\varepsilon$-DP when the record count is private.
This case reflects that an adversary does not know the record count of raw microdata.
When the record count is private,
the number of records may leak the information of the raw microdata, thus, 
further discussions are necessary.

\subsubsection{Impossible Case: No Suppression/Injection}

When the record count is private and no record suppression/injection is allowed, 
it is impossible for PRAM to satisfy $\varepsilon$-DP.
\begin{theorem}\label{th:dp1}
For any $\varepsilon > 0$, any PRAM mechanism $\Delta$ and any set of tables $\mathscr{T}$ that contains tables $\tau_{1}:{\cal R}_{1}\to{\cal V}$, and $\tau_{2}:{\cal R}_{2}\to{\cal V}$ such that 
$|{\cal R}_{1}| \neq |{\cal R}_{2}|$, 
$\Delta$ does {\em not} give $\varepsilon$-DP. 
\end{theorem}
\begin{proof}
Let $\tau'_{1}$ be a released table such that $\Delta$ potentially changes $\tau_{1}$ into $\tau'_{1}$
(i.e., $\pr[\Delta(\tau_{1}\circ\Pi_{1}^{-1}) = \tau'_{1}] \gneq 0$, while $\Pi_{1}$ is a random permutation on ${\cal R}_{1}$). 

Since $\Delta$ does not generate dummy data, 
it cannot yield $\tau'_{1}$ from $\tau_{2}$
(i.e., 
$\pr[\Delta(\tau_{2}\circ\Pi_{2}^{-1}) = \tau'_{1}] = 0$, 
while $\Pi_{2}$ is a random permutation on ${\cal R}_{2}$). 

Using these two probabilities, we obtain the following inequality for any $\varepsilon > 0$.
$$\pr[\Delta(\tau_{1}\circ\Pi_{1}^{-1}) = \tau'_{1}] \gneq \exp(\varepsilon)\pr[\Delta(\tau_{2}\circ\Pi_{2}^{-1}) = \tau'_{1}]$$
This inequality precisely means the negation of satisfaction of $\varepsilon$-DP. 

\qed
\end{proof}

\subsubsection{Possible Case: Suppression/Injection Allowed}

We have found that when the record count is private, 
PRAM cannot satisfy $\varepsilon$-DP. 
However, if record suppression/injection is allowed, 
we can demolish that result using \textit{discrete double exponential suppression/injection}.

We show our discrete double exponential suppression/ injection algorithm in \refalg{dde}. 
Discrete double exponential distribution $D_{p}$ on $\mathbb{Z}$, whose parameter is $0 < p < 1$, is defined by the following probability function. 
$$\pr[D_{p} = i] = \dfrac{1 - p}{1 + p}p^{i}$$
A value $d\in\mathbb{Z}$ on $D_{p}$ is generated as follows from uniformly random number $U \in [0, 1]$. 
$$d = 
\begin{cases}
\lceil-\dfrac{\ln(1 + p) + \ln U}{\ln p}\rceil & \text{ if } U < \dfrac{p}{1 + p} \\
0 & \text{ if } \dfrac{p}{1 + p} \leq U \leq \dfrac{1}{1 + p} \\ 
\lfloor\dfrac{\ln(1 + p) + \ln (1 - U)}{\ln p}\rfloor & \text{ otherwise}\\
\end{cases}
$$
In \refalg{dde}, we showed our algorithm ${\cal D}_{p}$ of discrete double exponential suppression/injection using the distribution $D_{p}$. 
Regarding this randomness, 
we require only independence with $\tau_{\rm in}$ and other randomly generated random records. 
We do not discuss detailed distribution of the generated records
because $\varepsilon$-DP in \refth{dp3} is determined regardless of that distribution. 
\begin{algorithm}
\caption{discrete double exponential suppression/injection ${\cal D}_{p}$\newline
input:$n\in\mathbb{N}$, $\tau_{\rm in}\in\mathscr{T}_{n}$, $0 < p < 1$\newline
output:$\tau_{\rm out}\in\mathscr{T}$(record count of $\tau_{\rm out}$ is determined randomly)}
\label{alg:dde}
\begin{algorithmic}[1]
\st Generate a value $d\in\mathbb{Z}$ on \textit{discrete double exponential distribution} whose parameter is $p$. 
\st If $n + d \leq 0$, return to step 1.
\st If $-n \lneq d \leq 0$, select $d$ records from $\tau_{\rm in}$ uniformly randomly, and delete them from $\tau_{\rm in}$. 
\st If $d \gneq 0$, generate $d$ {\it random} records and insert them into $\tau_{\rm in}$. 
\st Output updated $\tau_{\rm in}$ as $\tau_{\rm out}$.
\end{algorithmic}
\end{algorithm}

\begin{theorem}\label{th:dp3}
For any record count $n\in\mathbb{N}$, 
any discrete double exponential suppression/injection ${\cal D}_{p}$ whose parameter is $0 < p < 1$, 
and any PRAM mechanism $\Delta$ whose transition probability matrix is denoted as $A$, 
$\Delta\circ{\cal D}_{p}$ gives $\varepsilon$-DP with the following $\varepsilon$. 
$$\varepsilon = \ln\left(\dfrac{1 + p - p^{n + 1}}{1 + p - p^{n}}\right)-\ln p + \ln\left(\dmax{u, v \in {\cal V} \atop v' \in {\cal V}'}\dfrac{A_{u, v'}}{A_{v, v'}}\right)$$
\end{theorem}

\begin{proof}

Since discrete double exponential suppression/injection and PRAM are independent of each other as random processes, 
one can decompose $\varepsilon$ into two $\varepsilon$'s.  

For detailed proof, see Appendix~\ref{ap:thdp2}. 
\qed
\end{proof}

The multi-attribute version of \refth{dp3} is as follows. 
\begin{corollary}
For any $0 < p < 1$ and any PRAM mechanism $\Delta$ whose transition probability matrices are denoted as $A_{a}$ for each attribute $a$, 
$\Delta\circ{\cal D}_{p}$ gives $\varepsilon$-DP with the following $\varepsilon$. 
$$\varepsilon = \ln\left(\dfrac{1 + p - p^{n + 1}}{1 + p - p^{n}}\right)-\ln p + \dsum{a\in{\cal A}}\ln\left(\dmax{u, v \in {\cal V} \atop v' \in {\cal V}'}\dfrac{(A_{a})_{u, v'}}{(A_{a})_{v, v'}}\right)$$ 
\end{corollary}

\refth{dp3} is applicable when the record count is partially disclosed, 
i.e., when it is disclosed that the record count is $n$ or more. 

Even when the record count is absolutely private, 
the following corollary is applicable. 

\begin{corollary}
For any $0 < p < 1$ and any PRAM mechanism $\Delta$ whose transition probability matrices are denoted as $A_{a}$ for each attribute $a$, 
$\Delta\circ{\cal D}_{p}$ gives $\varepsilon$-DP with the following $\varepsilon$. 
$$\varepsilon = -2\ln p + \ln\left(\dmax{u, v \in {\cal V} \atop v' \in {\cal V}'}\dfrac{(A_{a})_{u, v'}}{(A_{a})_{v, v'}}\right)$$ 
\end{corollary}

Furthermore, 
when it is disclosed that the record count is very large, 
$\varepsilon$ becomes smaller. 

\begin{corollary}
Let $n$ be \textit{sufficiently} large. 
Then for any $0 < p < 1$ and any PRAM mechanism $\Delta$ whose transition probability matrices are denoted as $A_{a}$ for each attribute $a$, 
$\Delta\circ{\cal D}_{p}$ gives $\varepsilon$-DP with the following $\varepsilon$. 
$$\varepsilon = -\ln p + \ln\left(\dmax{u, v \in {\cal V} \atop v' \in {\cal V}'}\dfrac{(A_{a})_{u, v'}}{(A_{a})_{v, v'}}\right)$$ 
\end{corollary}

With an algorithm for determining $\rho$ in retention-replacement perturbation from $\varepsilon$, 
since $\rho$ and $p$ are different types of parameters, 
setting $\rho = p$ is not appropriate. 
However, once $p$ is determined, 
one can determine $\rho$ using \refalg{rho_from_e} by transposing the term(s) on $p$ in the above theorem and corollaries. 

\subsection{Guaranteeing Both Privacy Notions}

Regarding retention-replacement perturbation, 
guaranteeing both $Pk$-anonymity and $\varepsilon$-DP (\refalg{rho_from_ke}) is not difficult thanks to \refalg{rho_from_k} and \refalg{rho_from_e}. 

\begin{algorithm}
\caption{determining $\rho$ from $\varepsilon$\newline
input: $\varepsilon>0$ and $|{\cal V}_{a}|$ for each attribute $a$\newline
output: retention probability $\rho$}
\label{alg:rho_from_e}
\begin{algorithmic}[1]
\st Set $\rho_{0} = 1/2$. 
\st Run the bisection method with $\rho$'s initial value $\rho_{0}$ with respect to $\varepsilon$ using \refeq{rrp_dp}, and output the converged $\rho$. 
\end{algorithmic}
\end{algorithm}

\begin{algorithm}
\caption{determining $\rho$ from $k$ and $\varepsilon$\newline
input: $k\in\mathbb{R}(k \geq 1)$, $\varepsilon>0$, $|{\cal R}|\in\mathbb{N}$, $|{\cal V}_{a}|$ for each attribute\newline
output: retention probability $\rho$}
\label{alg:rho_from_ke}
\begin{algorithmic}[1]
\st Run \refalg{rho_from_k} and \refalg{rho_from_e} and let the results be $\rho_{k}$ and $\rho_{\varepsilon}$, respectively. 
\st output $\min(\rho_{k}, \rho_{\varepsilon})$. 
\end{algorithmic}
\end{algorithm}
\fi

\ignore{

\section{Proof of Theorem 4.5}\label{ap:thdp2}

We use the following additional notations in this proof. 
\begin{itemize}
\item $\mathscr{T}, \mathscr{T}'$: the set of private/released tables whose record count is arbitrary
\item $N_{\tau}$: the record count of a table $\tau$
\item $\mathscr{T}_{n}$: the set of private tables whose record count is $n\in\mathbb{N}$ or more
\item $\mathscr{L}$: the set of any person 
\item ${\cal R}_{\tau}$: the set of all records in table $\tau$ 
\end{itemize}

Let $\tau_{1}\in\mathscr{T}_{n}$, 
$\tau_{2}$ be a table generated from $\tau_{1}$ by inserting one arbitrary record, 
and $\tau' \in \mathscr{T}'$ be an arbitrary randomized table.  
In addition, for any table $\tau\in\mathscr{T}_{n}$, we denote the record count of $\tau$ by $N_{\tau}$, 
and let $d$ be a value on discrete double exponential distribution in \refalg{dde}.  

The propositions to prove are the following two inequalities, 
where $\eta = \dfrac{1 + p - p^{n + 1}}{1 + p - p^{n}}p^{-1}\left(\dmax{u, v \in {\cal V} \atop v' \in {\cal V}'}\dfrac{A_{u, v'}}{A_{v, v'}}\right)$. 
\begin{eqnarray*}
\dmax{\tau_{1}, \tau_{2}, \tau'}\dfrac{\pr[\Delta\circ{\cal D}_{p}(\tau_{1}) = \tau']}{\pr[\Delta\circ{\cal D}_{p}(\tau_{2}) = \tau']} \leq \eta\\
\dmax{\tau_{1}, \tau_{2}, \tau'}\dfrac{\pr[\Delta\circ{\cal D}_{p}(\tau_{2}) = \tau']}{\pr[\Delta\circ{\cal D}_{p}(\tau_{1}) = \tau']} \leq \eta
\end{eqnarray*}
We prove the former inequality. 
The latter is proven similarly. 

(1) when $N_{\tau'} \leq N_{\tau_{1}}$:\\

For each $\tau_{i}$ to satisfy $\Delta\circ{\cal D}_{p}(\tau_{i}) = \tau'$, 
$D$ must be $N_{\tau'} - N_{\tau_{i}}$.  
For $i = 1, 2$, 
by letting the set of all injective maps from ${\cal R}_{\tau_{i}}$ to ${\cal R}_{\tau'}$ be denoted as
$\Pi_{i}$,  
we obtain the following transformation.  
\begin{eqnarray*}
&&  \pr[\Delta\circ{\cal D}(\tau_{i}) = \tau'] \\
&=& \pr[D = N_{\tau'} - N_{\tau_{i}}]\dsum{\pi\in\Pi_{i}}\dfrac{\dprod{s'\in{\cal R}_{\tau'}}A_{\pi^{-1}(s'), \tau'(s')}}{\perm{N_{\tau_{i}}}{N_{\tau'}}}\\
\end{eqnarray*}
By using this, 
we can decompose the maximum of the proportion of probabilities into two maxima as follows.
\begin{eqnarray*}
&&\dmax{\tau_{1}, \tau_{2}, \tau'}\dfrac{\pr[\Delta\circ{\cal D}_{p}(\tau_{1}) = \tau']}{\pr[\Delta\circ{\cal D}_{p}(\tau_{2}) = \tau']}\\
&\leq& \dfrac{N_{\tau_{2}} - N_{\tau'}}{N_{\tau_{2}}}
\dmax{\tau_{1}, \tau_{2},\tau'}\dfrac{\pr[D = N_{\tau'} - N_{\tau_{1}}]}{\pr[D = N_{\tau'} - N_{\tau_{2}}]}\\
&\times& \dmax{\tau_{1}, \tau_{2},\tau'}\dfrac{\dsum{\pi\in\Pi_{1}}\dprod{s'\in{\cal R}_{\tau'}}A_{\pi^{-1}(s'), \tau'(s')}}
{\dsum{\pi\in\Pi_{2}}\dprod{s'\in{\cal R}_{\tau'}}A_{\pi^{-1}(s'), \tau'(s')}}
\end{eqnarray*}
We try to bound these two maxima. 
The following equation holds for $i = 1, 2$. 
$$\pr[D = N_{\tau'} - N_{\tau_{i}}] = \dfrac{1}{1- \dfrac{p^{N_{\tau_{i}}}}{(1 + p)}}\cdot\dfrac{1 - p}{1 + p}p^{N_{\tau'} - N_{\tau_{i}}}$$
Note that $\pr[D = N_{\tau'} - N_{\tau_{i}}] \neq \dfrac{1 - p}{1 + p}p^{N_{\tau'} - N_{\tau_{i}}}$
because $D + N_{\tau_{1} }$ is restricted to be non-negative due to step 2 in \refalg{dde}.
This equation allows us to find the first maximum.   
$$\dmax{\tau_{1}, \tau_{2},\tau'}\dfrac{\pr[D = N_{\tau'} - N_{\tau_{1}}]}{\pr[D = N_{\tau'} - N_{\tau_{2}}]}
= \dfrac{1 + p - p^{n + 1}}{1 + p - p^{n}}p^{-1}$$
For the second maximum, 
by applying \reflm{3} twice, 
we have its bound. 
$$\dmax{\tau_{1}, \tau_{2},\tau'}\dfrac{\dsum{\pi\in\Pi_{1}}\dprod{s'\in{\cal R}_{\tau'}}A_{\pi^{-1}(s'), \tau'(s')}}
{\dsum{\pi\in\Pi_{2}}\dprod{s'\in{\cal R}_{\tau'}}A_{\pi^{-1}(s'), \tau'(s')}}
\leq \dmax{u, v \in {\cal V} \atop v' \in {\cal V}'}\dfrac{A_{u, v'}}{A_{v, v'}}$$
Note that this bound is loose; 
however, 
it is sufficient to prove this theorem. 

(2) when $N_{\tau'} > N_{\tau_{1}}$:\\

As in (1), 
we can decompose the maximum into two maxima. 
The first maximum is exactly the same as in (1), 
and the second maximum is reduced to the maximum in the proof of \refth{dp2} by applying \reflm{3} twice.  
\begin{flushright}
$\Box   $
\end{flushright}
}

\end{document}